\def\@rmrk#1#2{\refstepcounter
    {#1}\@ifnextchar[{\@yrmrk{#1}{#2}}{\@xrmrk{#1}{#2}}}
\makeatletter\@addtoreset{equation}{section}\makeatother
 \newfont{\bfit}{cmbxti10 scaled 2000}
 \newfont{\biggi}{cmr12 scaled 2000}
 \newcommand{\R}{\mathbb{R}}
 \newcommand{\N}{\mathbb{N}}
 \newcommand{\prob}{\mathbb{P}}
 \newcommand{\me}{\mathbb{E}}
 \renewcommand{\P}{\mathbb{P}}
 \newcommand{\skrib}{{\mathcal B}}
 \newcommand{\skrih}{{\mathcal H}}
 \newcommand{\skril}{{\mathcal L}}
 \newcommand{\skrim}{{\mathcal M}}
 \newcommand{\skris}{{\mathcal S}}
 \newcommand{\skriy}{{\mathcal Y}}
 \newcommand{\sfrac}[2]{\mbox{$\frac{#1}{#2}$}}
\def\1{{\mathchoice {1\mskip-4mu\mathrm l}      
{1\mskip-4mu\mathrm l}
{1\mskip-4.5mu\mathrm l} {1\mskip-5mu\mathrm l}}}
\newcommand{\eq}{\begin{equation}}
\newcommand{\en}{\end{equation}}
\newenvironment{Proof}
{\vskip0.1cm\noindent{\bf Proof. }{\hspace*{0.3cm}}}%
{\nopagebreak {\hspace*{\fill}\rule{2mm}{2mm}}\\ }
\renewcommand{\subsection}{\secdef \subsct\sbsect}
\newcommand{\subsct}[2][default]{\refstepcounter{subsection}
\vspace{0.15cm}
{\flushleft\bf \arabic{section}.\arabic{subsection}~\bf #1  }
\nopagebreak\nopagebreak}
\newcommand{\sbsect}[1]{\vspace{0.1cm}\noindent
{\bf #1}\vspace{0.1cm}}
\newtheorem{theorem}{Theorem}[section]
\newtheorem{lemma}[theorem]{Lemma}
\newtheorem{prop}[theorem]{Proposition}
\newtheoremstyle{thm}{1.5ex}{1.5ex}{\itshape\rmfamily}{}
{\bfseries\rmfamily}{}{2ex}{}
\newtheoremstyle{rem}{1.3ex}{1.3ex}{\rmfamily}{}
{\itshape\rmfamily}{}{1.5ex}{}
\theoremstyle{rem}
\newtheorem{remark}{{\slshape\sffamily remark}}[]
\def\thebibliography#1{\section*{References}
  \list%
  {\arabic{enumi}.}
    {\settowidth\labelwidth{[#1]}\leftmargin\labelwidth
    \advance\leftmargin\labelsep
    \parsep0pt\itemsep0pt
    \usecounter{enumi}}
    \def\newblock{\hskip .11em plus .33em minus .07em}
    \sloppy                   
    \sfcode`\.=1000\relax}
\begin{document}
\title[LLDP  and LDP  for  Signal -to- Interference  and Noise Ratio Graph  Models ]
{\Large  Large  Deviation Principle  for Empirical sinr  Measure of Critical  Telecommunication  Networks}

\author[]{}

\maketitle
\thispagestyle{empty}
\vspace{-0.5cm}

\centerline{\sc By  Enoch Sakyi-Yeboah,  Charles Kwofie  and   Kwabena Doku-Amponsah }
\renewcommand{\thefootnote}{}
\footnote{\textit{AMS Subject Classification:} 60F10, 05C80, 68Q87}
\footnote{\textit{Keywords: } EmpiricalSinr measure, Poisson  Point  Process,rate Measure,power, Lebesgues  Measure, Empirical power Measure, Empirical link measuree,  Large deviation Principle,Relative  Entropy,Entropy}
\footnote{\textit {Affiliation:} University  of  Ghana, \textit {Email:}kdoku-amponsah@ug.edu.gh}
\renewcommand{\thefootnote}{1}

\vspace{-0.5cm}

\begin{quote}{\small }{\bf Abstract.}
For   a \emph{ powered Poisson  process},  we  define  \emph{Signal-to-Interference-plus-Noise Ratio}(SINR)  and  thesinr  network  as  a Telecommunication Network. We  define  the  Empirical  Measures   (\emph{empirical powered  measure}, \emph{empirical link measure}  and  \emph{empirical sinr measure})  of  a  class of  Telecommunication Networks. For  this  class  of  Telecommunication  Network  we  prove  a  joint  large  deviation principle for  the  empirical  measures  of  the  Telecommunication  Networks. All  our  rate  functions   are  expressed  in  terms  of  relative  entropies.
\end{quote}\vspace{0.3cm}

\vspace{0.3cm}
\vspace{0.3cm}


\section{Introduction   and Background}
\subsection{Introduction}

Since the inception of the 19th century the  world  had experience renaissance in the information theory of wireless communication channel, wireless networks. Further, multimedia technologies have seen significant growth in the last two decades. The use of handheld devices and obtaining services offered by the Internet has now become essential in our daily lives. Therefore, the availability of wireless networks and network quality of service (QoS) offer have become vital for mobile users.  See, Hassan, Tan and Yap~\cite{HTY2019}.\\

 Currently, telecommunication is simply an electrical medium of connecting over a distance (location and battery power). Telecommunication was discovered as an electrical waves and it  is suggested that they could travel a speed close to the speed of light. See, Paudel and  Bhattarai~\cite{PB2018}.The fundamental requirement of routing through any telecommunication network whether it is via voice call or data package; is that each end point on the network has a unique address which enables wireless communication. 
Cellular systems are now nearly universally deployed and are under ever-increasing pressure to increase the volume of data they can deliver to consumers. Refer  to  Andrews, Baccelli and  Ganti~\cite{ABG2011}  or  the  references  therein.

 Now, the advent of multimedia interactive services and the surge in the number of interconnected devices has led  to investigation of new approaches able to enhance wireless capacity in 5G networks. See,  example Aravanis, Lam, Muñoz, Pascual-Iserte  and Di Renzo~\cite{ALMPD2019}.  Marvi, Aijaz and  Khurram~\cite{MAK2019} posited that 8.3 billion hand-held devices and 3.3 billion machine-to-machine (M2M) devices will be connected by 2021. The number of connected devices would clearly exceed the expected global population of 7.8 billion by that time. The monthly global mobile data traffic is expected to reach 49 exabytes and the annual traffic will exceed half a zettabyte by 2021. \\

In informatory theory and telecommunication engineering, wireless consist of nodes which connect over a wireless channel,see   Gupta and  Kumar~\cite{GK2000}.  Signal -to- Interference-Plus-  Noise  Ratio (SINR) is a tool used as rate of information transfer in wireless communication system such as networks.  According to Jeske  and  Sampath ~\cite{JS2004}, thesinr is an important metric of wireless communication link quality.sinr estimates have several important applications. These include optimizing the transmit power level for a target quality of service, assisting with handoff decisions and dynamically adapting the data rate for wireless Internet applications. Communication performance can be improved significantly by adaptive transmissions based on the quality of received signals, i.e., the signal-to-interference-plus-noise ratio (SINR) ~Choi, Joo, Zhang  and  Shroff~ \cite{CJZS2013}. \\

 In cellular networks,sinr is a quantity that indicates if a given frequency resource is suitable to properly maintain a communication link. This is the rationale behind the usage insinr in network to monitor the occurrence of radio link and handover failures,  see Bastidas-Puga, Galaviz and  Andrade~\cite{BGA2018}. An accuratesinr estimation provides for both a more efficient system and a higher user‐perceived quality of service. Thus, thesinr is popularly used in wireless connection as a way to measure the quality of wireless connection within the space.\\

 Analogous to the SNR used often in wired communications systems, thesinr is defined as the power of a certain signal of interest divided by the sum of the interference power (from all the other interfering signals) and the power of some background noise. If the power of noise term is zero, then thesinr reduces to the signal-to-interference ratio (SIR). Conversely, zero interference reduces thesinr to the signal-to-noise ratio (SNR), which is used less often when developing mathematical models of wireless networks such as cellular networks.\\

Agrawal and Kshetrimavum~\cite{AK2017} derived the IPI, ISI and average outputsinr expressions for the binary phase shift keying (BSPK) modulated raised-cosine pulse in the beamforming-based mm-Wave MIMO system. At each receiving antenna, they use the coherent Rake receiver to capture the signal energy, carried by multipath components in the complete IEEE 802.15.3c channel model. Additionally, their paper presents the impacts of pulse duration and half power beam width (HPBWs) of the transmitting antennas on the average outputsinr.\\

Choi et al., (2013) focused on developing link scheduling schemes that can achieve optimal performance under thesinr model. The underlying argument was to treat an adaptive wireless link as multiple parallel virtual links with different signal quality, building on which they develop throughput-optimal scheduling schemes using a two-stage queueing structure in conjunction with recently developed carrier-sensing techniques. Furthermore, they introduce a novel three-way handshake to ensure, in a distributed manner, that all transmitting links satisfy theirsinr requirements.\\ 

Keeler, Błaszczyszyn  and  Karray~\cite{KBK2013} worked on an explicit integral interaction forsinr distribution experienced by a typical user in the downlink channel from the k-th strongest base stations of a cellular network modelled by Poisson point process on the plane. The outcome of their work shows that whole domain ofsinr was valid wheneverSinr$ <1,$ where one observes multiple coverage.\\

Aravanis et al.,~\cite{ALMPD2019} employs MGF to provide closed form expressions for the downlink ergodic capacity for the interference limited case, and validated the accuracy of these expressions by the use of extensive Monte Carlo simulations.\\

 Weiss~\cite{We1995} used large deviations techniques to analyze  models of communication networks. It  was  assumed  the  points  form  a sequence of independent and identical distributed random variables progressing  to some powerov processes in discrete or continuous time. Guiliano and Macci~\cite{GM2014} studied on the sequences of independent and identical distributed random variables and, under suitable conditions on the (common) distribution function, they proved large deviation principles for sequences of maxima, minima and pairs formed by maxima and minima. They assumed that  the independent and identical distributed random variables can be either unbounded or bounded; in the first case maxima and minima have to be suitably normalized.\\
 
 Duffy, Macci, and Torrisi~\cite{DMT2011} under suitable  assumptions about the large deviation behavior of the state selection and sojourn processes, proved that the empirical laws of the phase process satisfy a sample path large deviation principle. From this large deviation principle, the large deviations behavior of a class of modulated additive processes was deduced and an alternate proof of results for modulated Lévy processes  were obtained. With a practical application of the results,  they calculated the large deviation rate function for a process that arises as the International Telecommunications Union's standardized stochastic model of two-way conversational speech.  Other  researcher  have  also  studied  large  deviation  behaviour of the  interference  in  a wireless communication model. See, Ganesh  and  Torrisi~\cite{GT2008}.\\

 In this article, we prove a joint large deviation principle for the  empirical powered  measure,  empirical link  measure and the empiricalsinr measure   of critical  Telecommunication networks. In this sequel, we prove  a joint large  deviation  principle for  the   empirical powered  measure  and empiricallink  measure  of  the  Critical  Telecommunication Network.  See, Sekyi-Yeboah, Asiedu and  Doku-amponsah~\cite{SAD2020} for  similar  results for  the  dense Telecommunication Networks.   The  main  techniques  used  in this  article  are  the  Gartner-Ellis Theorem,  see Dembo and Zeitouni~\cite[Theorem~2.3.6]{DZ1998}   and the method  of  Mixtures as  deployed  in  the Ph  Thesi, Doku-Amponsah~\cite{DA2006}. \\
 
 The  remaining   part  of  this  article  is  organized in  the  following  manner.  Section~\ref{Sec2}  contain the  statement  of  the  main  results;  the  joint  LDP  for the  empirical  powered  measure, empirical link  measure  and  empiricalsinr  measure,  the  joint LDP  for  the  empirical  powered measure  and  the  empirical link  measure,  and  the  conditional  LDP  for  the  empiricalsinr measure  given  the  empirical powered  measure  aand  the  empirical link  measure. In  Section~\ref{Sec3}  we  give  the  proofs  of  Theorem~\ref{main1b}(i)~and Theorem~\ref{main1abc}. Section~\ref{Sec4} gives  the  proofs  of  Theorem~\ref{main1b}(ii)  and Theorem~\ref{main1ab}.


\subsection{Background}\label{SINR}\label{Sec1}

For  a  fix  dimension  $d\in\N$    and  a  measureable  set  $D\subset \R^d$    with  respect  to  the  Borel-Sgma  algebra  $\skrib(\R^d).$ 
Let   $\lambda \eta:D \to [0,1]$, be  a rate measure,  $q$   a transition kernel from  $D$   to  $(0,\,\infty)$  and      $\phi(r)=r^{-\ell}, $   here  $\ell \in(0,\infty),$   be  a  path loss  function and   $\tau^{(\lambda)}, \gamma^{(\lambda)}:(0\,,\,\infty)\to (0\,,\,\infty),$   be technical constants. We  define  theSinr  Graph as  follows:

\begin{itemize}
\item  Pick  $Y=(Y_i)_{i\in I}$   a  Poisson  Point  process  (PPP)  with rate measure  $\lambda \eta:D \to [0,1]$.
\item For  $Y,$   we    assign  each  $Y_i  $   a  power  $\rho(Y_i)=\rho_i$  independently  according  to  the  transition  function   $q(\cdot \,,\,Y_i).$
\item For   any  two  powered  points  $((Y_i,\rho_i),(Y_j,\rho_j))$  we    connect  a  link  iff  $$sinr(Y_i,Y_j, Y) \ge \tau^{(\lambda)}(\rho_j) \mbox{  and      $SINR(Y_j,Y_i, Y)\ge \tau^{(\lambda)}(\rho_i),$}$$  where $$SINR(Y_j,Y_i, Y)=\frac{\rho_i \phi(\|Y_i-Y_j\|)}{N_0 +\gamma^{(\lambda)}(\rho_j)\sum_{i\in I-\{j\} }\rho_i\phi(\|Y_i-Y_j\|)}$$
\end{itemize}

We shall  consider  $Y^{\lambda}(\eta, Q, \phi)=\Big\{[(Y_i,\rho_i), j\in I], \, E\Big\}$  under  the  joint  law  of  the  powered  Poisson Point  process  and    the  network.  We  will  interpret   $Y^{\lambda}$    as   an Sinr  network  and   $ (Y_i,\rho_i):= Y_i^{\lambda}$   as  the  power   of    device  $i.$  We  recall  Proposition~\ref{Pro}   from  \cite{SAD2020}  as  follows:

\begin{prop}[SAD2020]\label{Pro}
	
	The link  probability  of  the Sinr  network,   $ p_{\lambda}$,  is  given  by  $$ p_{\lambda}((x,\rho_x),(y,\rho_y))= e^{-\lambda h_{\lambda}^{D} ((x,\rho_x),(y,\rho_y))) },$$
	
	$$h_{\lambda}^{D} ((x,\rho_x),(y,\rho_y))= \int_{D} \Big[\sfrac{ \tau^{(\lambda)}(\rho_x) \gamma^{(\lambda)}(\rho_x)  }{\tau^{(\lambda)}(\rho_x) \gamma^{(\lambda)}(\rho_x)+(\|z\|^{\ell}/\|x-y \|^{\ell})} + \sfrac{ \tau^{(\lambda)}(\rho_y) \gamma^{(\lambda)}(\rho_y)  }{\tau^{(\lambda)}(\rho_y) \gamma^{(\lambda)}(\rho_y)+(\|z\|^{\ell}/\|y-x \|^{\ell})}\Big] \eta(dz).$$
\end{prop}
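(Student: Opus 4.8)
The plan is to realise $p_\lambda$ as a Laplace functional of the marked Poisson process that produces the interference, and to read off the two summands of $h_\lambda^D$ from the two directional thresholds. First I would fix the powered points $(x,\rho_x)$ and $(y,\rho_y)$ and invoke Slivnyak's theorem: conditioning a Poisson process of intensity $\lambda\eta$ on the presence of points at $x$ and $y$ leaves the remaining atoms distributed as an independent copy of the same process, each carrying an independent power drawn from $q(\cdot\,,\,\cdot)$. Thus $p_\lambda((x,\rho_x),(y,\rho_y))$ is exactly the probability, over this residual marked process, that both directional conditions hold,
$$SINR(y,x,Y)\ge \tau^{(\lambda)}(\rho_x) \quad\text{and}\quad SINR(x,y,Y)\ge \tau^{(\lambda)}(\rho_y).$$

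Next I would linearise each threshold. Writing the interference at $y$ as the shot noise $I_y=\sum_{k}\rho_k\,\phi(\|Y_k-y\|)$ and using the exponential (Rayleigh) structure of the served power carried by the companion model of \cite{SAD2020}, the conditional probability of $SINR(y,x,Y)\ge\tau^{(\lambda)}(\rho_x)$ given the point configuration becomes $\exp(-s_x I_y)$ with $s_x=\tau^{(\lambda)}(\rho_x)\gamma^{(\lambda)}(\rho_x)\|x-y\|^{\ell}$; in the critical, interference-limited regime the thermal noise $N_0$ contributes only a lower-order factor, which I would set to zero or absorb into the scaling of the technical constants so that it does not appear in $h_\lambda^D$. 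The reverse direction gives the analogous factor with $s_y=\tau^{(\lambda)}(\rho_y)\gamma^{(\lambda)}(\rho_y)\|y-x\|^{\ell}$. Treating the two link directions as independent events then factorises $p_\lambda$ into a product of two single-direction expectations.

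I would then evaluate each factor through Campbell's formula for the Laplace functional of the marked Poisson process,
$$\E\Big[\exp\big(-s_x{\textstyle\sum_k}\rho_k\phi(\|Y_k-y\|)\big)\Big]=\exp\Big(-\lambda\int_D\big(1-{\textstyle\int_0^\infty}e^{-s_x\rho\,\phi(\|w-y\|)}\,q(d\rho,w)\big)\,\eta(dw)\Big).$$
Using the exponential form of the power kernel, ${\textstyle\int_0^\infty}e^{-s\rho\phi}\,q(d\rho)=(1+s\phi)^{-1}$, the inner bracket collapses to $\sfrac{s_x\phi(\|w-y\|)}{1+s_x\phi(\|w-y\|)}$; substituting $\phi(r)=r^{-\ell}$, $s_x=\tau^{(\lambda)}(\rho_x)\gamma^{(\lambda)}(\rho_x)\|x-y\|^{\ell}$ and renaming the displacement $z=w-y$ reduces this to precisely $\sfrac{\tau^{(\lambda)}(\rho_x)\gamma^{(\lambda)}(\rho_x)}{\tau^{(\lambda)}(\rho_x)\gamma^{(\lambda)}(\rho_x)+\|z\|^{\ell}/\|x-y\|^{\ell}}$. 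Repeating for the reverse direction produces the second integrand, and adding the two exponents yields $p_\lambda=e^{-\lambda h_\lambda^{D}}$ with $h_\lambda^{D}$ as stated.

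The main obstacle is the factorisation step: the two directional interference fields $I_y$ and $I_x$ are driven by the \emph{same} Poisson process and hence are dependent, so writing $p_\lambda$ as a clean product of two single-direction Laplace functionals requires either the independent-fading/independent-link modelling convention of the companion construction or a separate argument that the cross term is negligible under the critical scaling. Secondary technical points I would need to verify are the finiteness and convergence of the shot-noise Laplace functional (the singularity of $\phi(r)=r^{-\ell}$ as $w\to y$, together with integrability over $D$), the legitimacy of interchanging the fading and point-process expectations via Tonelli, and the careful bookkeeping of which power argument enters $\tau^{(\lambda)}$ and $\gamma^{(\lambda)}$ in each direction, so that the pairing of $\tau^{(\lambda)}(\rho_x)$ with $\gamma^{(\lambda)}(\rho_x)$ (and of $\tau^{(\lambda)}(\rho_y)$ with $\gamma^{(\lambda)}(\rho_y)$) in the stated integrand comes out exactly.
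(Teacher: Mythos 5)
There is no ``paper's own proof'' to compare you against: Proposition~\ref{Pro} is imported verbatim from the companion paper \cite{SAD2020} (``We recall Proposition~\ref{Pro} from \cite{SAD2020}'') and is never proved in this article, so your proposal can only be judged on its own merits against the model described in Subsection~\ref{Sec1}. On those merits, your route is the natural one --- Slivnyak's theorem to fix the two endpoints, exponential fading to convert each directional constraint into a shot-noise Laplace transform, then Campbell's formula with the exponential mark kernel --- and it does reproduce the exact integrand $\tau^{(\lambda)}\gamma^{(\lambda)}\big/\big(\tau^{(\lambda)}\gamma^{(\lambda)}+\|z\|^{\ell}/\|x-y\|^{\ell}\big)$, so it is very likely the derivation underlying \cite{SAD2020}.

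However, the obstacle you flag yourself is a genuine gap, not a verification detail, and as written your argument does not close it. The two interference fields $I_x$ and $I_y$ are functionals of the \emph{same} marked Poisson process, so the honest computation is the joint Laplace functional
\begin{equation*}
\E\big[e^{-s_xI_y-s_yI_x}\big]
=\exp\Big(-\lambda\int_D\Big(1-\int_0^\infty e^{-s_x\rho\,\phi(\|z-y\|)-s_y\rho\,\phi(\|z-x\|)}\,q(d\rho)\Big)\eta(dz)\Big),
\end{equation*}
and since $1-e^{-a-b}=(1-e^{-a})+(1-e^{-b})-(1-e^{-a})(1-e^{-b})$, its exponent is strictly smaller than the additive exponent $\lambda h_{\lambda}^{D}$ claimed in the proposition. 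The additive form is precisely the independent-directions formula; the cross term $(1-e^{-a})(1-e^{-b})$ is not obviously negligible (the $s$'s involve the unspecified technical constants $\tau^{(\lambda)},\gamma^{(\lambda)}$), so without importing an explicit independence convention from \cite{SAD2020} you obtain a \emph{larger} link probability than $e^{-\lambda h_{\lambda}^{D}}$, not equality. Two further steps are assumptions rather than proofs: the noise $N_0$ appears in the model's SINR but not in $h_{\lambda}^{D}$, and ``absorbing'' it requires a property of the technical constants that neither paper states; and $p_{\lambda}$ is defined with the endpoint powers $\rho_x,\rho_y$ held fixed, so replacing the indicator of $\{SINR\ge\tau^{(\lambda)}\}$ by $e^{-s_xI_y}$ re-randomizes the served power, which is legitimate only under an additional fading convention, not under the model as printed here. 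Finally, your bookkeeping worry is justified: the SINR definition in Subsection~\ref{Sec1} attaches the threshold to one endpoint's power and the interference factor $\gamma^{(\lambda)}$ to the other's (cross pairing), so your derivation applied to that definition would produce $\tau^{(\lambda)}(\rho_x)\gamma^{(\lambda)}(\rho_y)$ in the first summand rather than the same-argument product $\tau^{(\lambda)}(\rho_x)\gamma^{(\lambda)}(\rho_x)$ stated in the proposition; reconciling the two requires either correcting the definition or the statement, and cannot be done inside the proof.
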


We  assume  there  a  sequence  of  real  numbers  $a_{\lambda}$  and  a function  $h_* :D\times \R_+\to (0,\infty)$ such  that  $\lambda^{2}a_{\lambda}\to\infty$  and  $$\lim_{\lambda\uparrow\infty}\lambda^2a_{\lambda}p_{\lambda}((x,\rho_x),(y,\rho_y))=h_{*}((x,\rho_x),(y,\rho_y)).$$

We  shall  call  $Y^{\lambda}$   \emph{Critical}Sinr  if   $\lambda a_{\lambda}\to 1$,
\emph{Subcritical}sinr  if  $\lambda a_{\lambda}\to  0$ and  if  $\lambda a_{\lambda}\to  \infty$  we  call it \emph{Supercritical}sinr.   In  this  article  We  shall  look  at  criticalsinr networks. i.e.$$\lim_{\lambda\to\infty}\lambda a_{\lambda}\to 1.$$  We  define  the  set $\skris(D)$  by

 \begin{equation}
\skris(D)=\cup_{y\subset D}\Big\{y:\,\, |y\cap B|<\infty\,\, ,\mbox{for\, any  bounded  $B\subset D$ }\Big \}.
\end{equation}

Write  $\skriy= \skris(D\times\R_+)$   and  denote  by  $\skrim(\skriy)$, the  space  of  positive  measures  on  the  space  $\skriy$   equipped  with  $\tau-$  topology. Henceforth,  we shall call  $\skriy$ a locally  finite  subset  of  the  set  $\skriy.$\\

 {\bf Empirical  measures of  the  Sinr  Networks:}   For any Sinr graph $Y^\lambda$  we  define a probability measure, the
\emph{empirical power measure}, ~ $M_1^{\lambda}\in\skrim(\skriy)$,~by
$$M_1^{\lambda}\big((x,\rho_x)\big ):=\frac{1}{\lambda}\sum_{i\in I}\delta_{Y_i^{\lambda}}\big((x,\rho_x)\big)$$
and a symmetric finite measure, the \emph{empirical pair measure}
$M_2^{\lambda}\in\skrim(\skriy\times \skriy),$ by
$$M_2^{\lambda}\big((x,\rho_x),(y,\rho_y)\big):=\frac{1}{\lambda}\sum_{(i,j)\in E}[\delta_{(Y_i^{\lambda},Y_j^{\lambda})}+
\delta_{(Y_j^{\lambda},Y_i^{\lambda})}]]\big((x,\rho_x),(y,\rho_y)\big).$$

 Note  that the  total mass  $\|M_1^{\lambda}\|$ of  the  empirical power  measure  is $\1$  and  total  mass  of  
the empirical link measure is
$2|E|/\lambda^2$.

\begin{theorem}[SAD2020]\label{main1a}
	Suppose   $Y^{\lambda}$  is  an Sinr  network  with rate measure
	$\lambda \eta:D \to [0,1]$ and   a   power  probability  function  $q$  from  $D$   to  $(0,\infty)$  and  path  loss  function   $\phi(r)=r^{-\ell}, $  for  $\ell>0 .$ 
	Then, as $\lambda\rightarrow\infty,$
	$M_1^{\lambda}$  satisfies an  LDP   in   the  space 
	$\skrim(\skriy)$
	with good rate function
	$$
	\begin{aligned}
	I_1(\sigma)= \left\{\begin{array}{ll}H(\sigma\,|\eta\otimes q),  & \mbox{if  $\|\sigma\|=1$ }\\
	\infty & \mbox{otherwise.}		
	\end{array}\right.
	\end{aligned}	$$
\end{theorem}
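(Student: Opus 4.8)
The plan is to reduce the statement to the Gartner--Ellis theorem applied to the finite-dimensional projections of $M_1^{\lambda}$, and then to pass to the $\tau$-topology on $\skrim(\skriy)$ by a Dawson--Gartner projective-limit argument. The structural input that makes this feasible is the marking theorem for Poisson processes: attaching to each atom $Y_i$ of the rate-$\lambda\eta$ process an independent power $\rho_i\sim q(\cdot,Y_i)$ produces a Poisson point process on $D\times\R_+$ whose intensity is $\lambda(\eta\otimes q)$, where $\eta\otimes q$ is a probability measure because $\eta$ has unit total mass and $q(\cdot,y)$ is a probability measure for each $y$. Thus $M_1^{\lambda}=\frac1\lambda\sum_i\delta_{(Y_i,\rho_i)}$ is the scaled empirical measure of a marked Poisson process, and its exponential moments are explicitly computable.

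I would first fix finitely many bounded measurable test functions $g_1,\dots,g_k$ on $D\times\R_+$ and study the $\R^k$-valued vector $\Phi_\lambda=(\langle M_1^\lambda,g_1\rangle,\dots,\langle M_1^\lambda,g_k\rangle)$. Conditioning on the number of atoms---equivalently, representing the process as the Poissonian mixture of its $n$-point restrictions, which is the method of mixtures of \cite{DA2006}---turns the marked atoms into i.i.d.\ samples from $\eta\otimes q$, so that the scaled logarithmic moment generating function converges to
\[
\Lambda(t_1,\dots,t_k)=\lim_{\lambda\to\infty}\frac1\lambda\log\E\Big[e^{\lambda\langle t,\Phi_\lambda\rangle}\Big]=\log\int_{D\times\R_+}e^{\sum_j t_j g_j}\,d(\eta\otimes q).
\]
Since $\Lambda$ is finite, convex and differentiable on $\R^k$, Theorem~2.3.6 of \cite{DZ1998} yields an LDP for $\Phi_\lambda$ at speed $\lambda$ with good rate function the Legendre--Fenchel transform $\Lambda^{*}$.

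Next I would invoke the Dawson--Gartner theorem to lift this projective family of finite-dimensional principles to a full LDP for $M_1^\lambda$ in $\skrim(\skriy)$ with the $\tau$-topology, the rate function being the supremum of the $\Lambda^{*}$ over all finite collections of test functions, namely
\[
I_1(\sigma)=\sup_{g}\Big[\langle\sigma,g\rangle-\log\int_{D\times\R_+}e^{g}\,d(\eta\otimes q)\Big].
\]
A standard duality computation then identifies this supremum with the relative entropy $H(\sigma\,|\,\eta\otimes q)$ whenever $\|\sigma\|=1$, the maximiser being $g=\log\frac{d\sigma}{d(\eta\otimes q)}$; testing the constant functions $g\equiv c$ and letting $c\to\pm\infty$ shows that the supremum is $+\infty$ as soon as $\sigma$ is not a probability measure, which produces the stated dichotomy. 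Goodness of $I_1$ follows from the superlinearity of $u\mapsto u\log u$ together with the exponential tightness of $(M_1^\lambda)$, which I would extract from the same Laplace-functional estimate used above.

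The step I expect to carry the real difficulty is the passage to the $\tau$-topology. This topology is not metrisable and $\skrim(\skriy)$ is not Polish in it, so exponential tightness cannot be quoted off the shelf and must be built from level-set bounds on the total mass and on the mass outside large bounded sets; moreover the Dawson--Gartner identification has to be carried out against all bounded measurable, rather than merely continuous, functions. A second, more delicate point is the unit-mass constraint: because the atom count is Poisson$(\lambda)$ rather than deterministic, one must verify that the random normalisation $\|M_1^\lambda\|$ concentrates at $1$ strongly enough that the speed-$\lambda$ rate is governed solely by the conditional Sanov functional, which is precisely what forces $I_1\equiv\infty$ off the unit-mass simplex.
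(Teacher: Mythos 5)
You should first note a structural point: this paper contains no proof of Theorem~\ref{main1a} at all. It is recalled verbatim from \cite{SAD2020} (hence the attribution ``[SAD2020]'' in its header), and the proofs actually given here, in Sections~\ref{Sec3} and~\ref{Sec4}, concern the conditional and joint LDPs for $M_2^{\lambda}$ and $\skril^{1,2}$. So your proposal can only be measured against the paper's general toolkit (Gartner--Ellis plus the method of mixtures), which in outline your plan does mirror. The problem is that your central computation is wrong for the model as you yourself set it up.

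For a Poisson point process on $D\times\R_+$ with intensity $\lambda(\eta\otimes q)$, the Laplace functional gives, exactly and for every $\lambda$,
\begin{equation*}
\frac{1}{\lambda}\log\E\Big[e^{\lambda\langle t,\Phi_\lambda\rangle}\Big]
=\int_{D\times\R_+}\Big(e^{\sum_j t_j g_j}-1\Big)\,d(\eta\otimes q),
\end{equation*}
and \emph{not} $\log\int e^{\sum_j t_j g_j}\,d(\eta\otimes q)$. Your step ``conditioning on the number of atoms turns them into i.i.d.\ samples'' is correct conditionally, but re-summing over the Poisson$(\lambda)$ count produces the exponential of the integral above; you have silently replaced the random count by a deterministic count $\lambda$, which is the i.i.d./Sanov setting, not the Poisson one. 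The difference is not cosmetic: the Legendre--Fenchel transform of $g\mapsto\int(e^{g}-1)\,d(\eta\otimes q)$ is the Poissonian entropy $\int\big(f\log f-f+1\big)\,d(\eta\otimes q)$ with $f=d\sigma/d(\eta\otimes q)$, which is \emph{finite} for measures of any total mass. Consequently, Gartner--Ellis applied to the unconditioned Poisson model cannot yield the stated dichotomy $I_1(\sigma)=\infty$ for $\|\sigma\|\neq 1$: deviations of $\|M_1^{\lambda}\|=N_\lambda/\lambda$ occur at exactly the LDP speed $\lambda$ with finite rate $x\log x-x+1$, so they are not super-exponentially negligible, contrary to what your final paragraph proposes to ``verify'' --- that verification must fail. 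To obtain the theorem as stated one has to build the fixed-mass constraint into the model, i.e.\ condition on the number of atoms (equivalently, take $\lambda$ i.i.d.\ marked points, so that the result is Sanov's theorem plus the marking theorem); this is implicitly what the paper does when it asserts $\|M_1^{\lambda}\|=1$. Otherwise the honest conclusion of your own method is the Poissonian rate function with the mass-defect term --- which is precisely the form $\skrih(\cdot\,\|\,\cdot)$, relative entropy plus mass difference, that the paper itself uses for the link measure $M_2^{\lambda}$ in Theorem~\ref{main1b}.
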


\section {Statement of Main  Results}\label{mainresults}\label{Sec2}
 We,  define the empirical   sinr measure  $\skril_{(x,\rho_x)}^{1,2} $ by $$\skril_{(x,\rho_x)}^{1,2}(a):=\frac{1}{\|N_{\lambda}((x,\rho_x))\|}\sum_{i\in N_{\lambda}([x,\rho])}\delta_{SINR(Y_i^{\lambda}, (x,\rho_x),M_1^{\lambda})}(a) ,$$
 
 where  $$N_{\lambda}(x)=\Big\{k\in I:\,  (k,j)\in E,\, Y_j^{\lambda}=x\Big\}$$

We observe  that  we  have    $$\|N_{\lambda}((x,\rho_x))\|=\lambda\int_{\skriy}M_2^{\lambda}((x,\rho_x),(dy,d\rho_y))$$

and  $$\skril_{(x,\rho_x)}^{1,2}(a)=\frac{1}{\| N^{\lambda}((x,\rho_x))\|}\int_{ \skriy} \Phi_{a}^{\lambda}\Big((x,\rho_x)\, ,\, (y,\rho_y)\,,\, M_1^{\lambda}\Big) M_2^{\lambda}((dy,d\rho_y),(dx,d\rho_x)),$$ where $$\Phi_{a}^{\lambda}\Big((x,\rho_x)\,,\, (y,\rho_y)\,,\,\sigma\Big)= \1_{\Big\{\tau^{(\lambda)}(\rho_{y})\le Sinr\big((y,\rho_y), (x,\rho_x)\,,\,\sigma\big)\le a(\rho_x)\Big\}} $$ 

We  write  $$\skril^{1,2}:=\Big(\skril_{(x,\rho_x)}^{1,2},(x,\rho_x)\in \skriy\Big).$$


Theorem~\ref{main1ab},  is  a  Joint  Large  deviation principle   for  the  empirical  measures  of  the Sinr  network models.  We  recall  from  Subsection~\ref{Sec1} the  definition  of  $h_{\lambda}^{D}$  as  

  $$h_{\lambda}^{D} ((x,\rho_x),(y,\rho_y))= \int_{D} \Big[\sfrac{ \tau^{(\lambda)}(\rho_x) \gamma^{(\lambda)}(\rho_x)  }{\tau(\rho_x) \gamma(\rho_x)+(\|z\|^{\ell}/\|x-y \|^{\ell})} + \sfrac{ \tau^{(\lambda)}(\rho_y) \gamma^{(\lambda)}(\rho_y)  }{\tau^{(\lambda)}(\rho_y) \gamma^{(\lambda)}(\rho_y)+(\|z\|^{\ell}/\|y-x \|^{\ell})}\Big] \eta(dz)$$    and   write	 $$h_*\sigma\otimes\sigma((x,\rho_x),(y,\rho_y))):=h_*((x,\rho_x),(y,\rho_y))\sigma((x,\rho_x))\sigma((y,\rho_y)).$$  
 We  define    $\Big\langle  f_{\sigma},\, \eta \Big\rangle$  by $$\Big\langle  f_{\sigma},\, \eta \Big\rangle_{(x,\rho_x)}(a) =\frac{1}{\eta_2((x,\rho_x))}\int_{\skriy} f_a((x,\rho_x),dz,\sigma)\eta(dz\,,\,(x,\rho)).$$

   Observe  that, for  a finite  measure $\omega$

\begin{equation}\label{Com2}
\Big\langle  \Phi_{\sigma}^{\lambda},\, \omega \Big\rangle_{(x\, ,\,\rho_x)}(a)=\frac{1}{\omega_2((x,\rho_x))}\int_{\skriy}\Phi_{a}^{\lambda}\Big((y,\rho_y)\,,\, (x,\rho_x)\,,\,\sigma\Big)\omega((dy,d\rho_y)\,,\,(x,\rho_x)), 	
\end{equation}

is   a  probability  measure.  We  write  $\displaystyle \lim_{\lambda\to\infty}\Phi_{a}^{\lambda}=\Phi_{a}$   and  note that   $$\displaystyle \lim_{\lambda\to\infty}\Big\langle  \Phi_{\sigma}^{\lambda},\, \omega \Big\rangle_{(x,\rho_x)}(a)=\Big\langle  \Phi_{\sigma},\, \omega \Big\rangle_{(x,\rho_x)}(a),\,\mbox{  for all $a\in[\tau,\infty),$}$$
by  he  dominated  convergence.

\begin{theorem}	\label{main1ab}
	Let   $Y^{\lambda}$  is  a critical powered Sinr  network  with rate measure
	$\lambda \eta:D \to [0,1]$ and   a   power  probability  function  $q$  from  $D$   to  $(0,\,\infty)$  and  path  loss  function   $\phi(r)=r^{-\ell}, $  for  $\ell>.$   Suppose  $q$ is an exponential  distribution  with  mean  $1/c.$ Then, as  $\lambda\to \infty$,  the  triplet  $(M_1^{\lambda},M_2^{\lambda}, \skril^{1,2})$  satisfies a  large  deviation principle  in the  space 
		$\skrim(\skriy)\times \skrim(\skriy\times\skriy)\times\skrim(  [\tau,\,\infty))$
		with  speed   $\lambda$  and  good rate function

		\begin{equation}
		J_*\big(\sigma, \omega,\nu\big)= H\Big(\sigma\Big |\eta\otimes q\Big)+\frac{1}{2} \skrih\Big(\omega\|h_*\sigma\otimes\sigma\Big)+\frac{1}{2}\int_{\skriy}H\Big(\nu_{(x,\rho_x)}\Big\|\Big\langle \Phi_{\sigma} ,\omega \Big\rangle_{[x\, ,\,\rho_x]}
		\Big) \omega_2((dx,d\rho_x))
		\end{equation} 
	\end{theorem}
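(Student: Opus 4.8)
The plan is to build $J_*$ additively out of three pieces, matching its three summands, and to glue them together by the method of mixtures of Doku-Amponsah~\cite{DA2006}. First I would invoke the joint large deviation principle for the pair $(M_1^{\lambda},M_2^{\lambda})$ established in Theorem~\ref{main1b}(ii), whose good rate function is exactly $H(\sigma\,|\,\eta\otimes q)+\tfrac12\skrih(\omega\,\|\,h_*\sigma\otimes\sigma)$; this disposes of the first two summands and, in particular, shows that $(M_1^{\lambda},M_2^{\lambda})$ is exponentially tight in $\skrim(\skriy)\times\skrim(\skriy\times\skriy)$. It then remains to produce a conditional large deviation principle for the empirical sinr family $\skril^{1,2}$ given $\{(M_1^{\lambda},M_2^{\lambda})=(\sigma,\omega)\}$, with the integral term as its rate function, and finally to combine the two.

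For the conditional step I would fix $(\sigma,\omega)$ and analyse the family $\big(\skril^{1,2}_{(x,\rho_x)}\big)_{(x,\rho_x)\in\skriy}$ one base point at a time. Writing $\skril^{1,2}_{(x,\rho_x)}$ as the empirical measure of the values $SINR(Y_i^{\lambda},(x,\rho_x),M_1^{\lambda})$ over the $\|N_\lambda((x,\rho_x))\|=\lambda\int_{\skriy}M_2^{\lambda}((x,\rho_x),(dy,d\rho_y))$ neighbours of $(x,\rho_x)$, the key point is that, because $q$ is exponential with mean $1/c$, the power at the transmitting endpoint can be integrated out explicitly — this is precisely the computation behind the closed form of $p_\lambda$ in Proposition~\ref{Pro} — so that, conditionally on the network, each such sinr value is asymptotically an independent draw from the probability measure $\langle\Phi_\sigma^{\lambda},\omega\rangle_{(x,\rho_x)}$ of~\eqref{Com2}. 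Sanov's theorem then gives, at the point $(x,\rho_x)$ and at the local speed $\lambda\omega_2((x,\rho_x))$, an LDP for $\skril^{1,2}_{(x,\rho_x)}$ with rate $H\big(\nu_{(x,\rho_x)}\,\|\,\langle\Phi_\sigma,\omega\rangle_{(x,\rho_x)}\big)$, using that $\langle\Phi_\sigma^{\lambda},\omega\rangle\to\langle\Phi_\sigma,\omega\rangle$ by dominated convergence. Integrating these independent contributions against the neighbour count $\omega_2(dx,d\rho_x)$ and accounting for the symmetric double-counting of links produces the factor $\tfrac12$ and the third summand $\tfrac12\int_{\skriy}H(\nu_{(x,\rho_x)}\,\|\,\langle\Phi_\sigma,\omega\rangle_{[x,\rho_x]})\,\omega_2(dx,d\rho_x)$.

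Finally I would combine the unconditional LDP of Theorem~\ref{main1b}(ii) with this conditional LDP via the mixture argument of~\cite{DA2006}: writing the law of the triplet as the $(M_1^{\lambda},M_2^{\lambda})$-mixture of the conditional laws of $\skril^{1,2}$, the additivity of rate functions under mixtures yields a joint LDP on the product space with rate function $J_*(\sigma,\omega,\nu)$. This requires checking that the conditional rate function $(\sigma,\omega)\mapsto\tfrac12\int_{\skriy}H(\nu_{(x,\rho_x)}\,\|\,\langle\Phi_\sigma,\omega\rangle_{[x,\rho_x]})\,\omega_2(dx,d\rho_x)$ is jointly lower semicontinuous and that the combined family is exponentially tight in the $\tau$-topology, after which goodness of $J_*$ and its lower semicontinuity follow from those of the summands.

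The step I expect to be the main obstacle is the conditional LDP itself, and specifically the passage from the indicator weight $\Phi_a^{\lambda}$ at finite $\lambda$ to its limit $\Phi_a$ inside the large deviation estimate. Pointwise convergence $\Phi_a^{\lambda}\to\Phi_a$ is cheap, but to transfer Sanov's theorem to the true $\lambda$-dependent conditional law one must control this convergence uniformly enough — in practice via an exponential-equivalence (super-exponential approximation) argument in the sense of~\cite{DZ1998} — and simultaneously justify that the neighbourhood sizes $\|N_\lambda((x,\rho_x))\|$ concentrate at $\lambda\omega_2((x,\rho_x))$ and that the interference term entering $SINR$ may be replaced by its $M_1^{\lambda}$-average without affecting the rate. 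Verifying the lower semicontinuity and measurability hypotheses needed to legitimise the mixture is comparatively routine once these approximations are in place.
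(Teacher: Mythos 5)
Your proposal follows essentially the same route as the paper: the first two summands come from the pair LDP of Theorem~\ref{main1b}(ii), the third from a conditional LDP for $\skril^{1,2}$ given $\{(M_1^{\lambda},M_2^{\lambda})=(\sigma,\omega)\}$ built on the conditional independence of the family $\big(\skril^{1,2}_{(x,\rho_x)}\big)$ and the convergence $\langle\Phi_{\sigma}^{\lambda},\omega\rangle\to\langle\Phi_{\sigma},\omega\rangle$, and the pieces are glued by the method of mixtures (Biggins' theorem) after checking exponential tightness and lower semicontinuity of the rate functions --- exactly the paper's Sections~3--4. The only cosmetic difference is that you obtain the conditional rate via a pointwise Sanov argument at local speed $\lambda\omega_2$, whereas the paper computes the conditional cumulant generating functional (Lemma~\ref{Com1}) and applies the G\"artner--Ellis theorem together with the variational formula for relative entropy; both yield the same entropy integral from the same independence structure.
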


\begin{remark}{\bf Interpretation  of  the  Rate  Function:}
	The  rate  function  can  be  regarded  as  the  cost  of  having  a  powered   Sinr network  corresponding  to  the  empirical  measures triplet  $(\sigma, \omega,\nu).$   This  cost  may  be  divided  into  three separate  costs:
	\begin{itemize}
		\item[(i)]The  first  term  is  the  cost  of  having  the  empirical powered  measure  $\sigma$  is  known.  This  cost  is  non-negative  and  it is  zero  iff  $\sigma=\eta\otimes q.$
		\item[(ii)] The  second  term  is  the  cost  of  obtaining  the  empirical link  measure  $\omega$  given the  empirical  powered  measure  $\sigma$.  This  cost  is  also  non-negative  and  it  is  zero  iff  $ \omega=h_{*}\sigma\otimes\sigma$
		\item[(iii)] The  last  term  is  the  cost  obtaining  the  empiricalsinr measure  given  the  empirical powered  measure  $\sigma$  and  the  empirical link  measure  $\omega.$  This  cost  is  also  none-negative  and  it  is  zero  iff   $\nu=\Big\langle \Phi_{\sigma} ,\omega \Big\rangle.$

	\end{itemize}

	This  implies   the  cost   $J_*\big(\sigma, \omega,\nu\big)=0$  iff   $\sigma=\eta\otimes q,$  $ \omega=h_{*}(\eta\otimes q)\otimes (\eta\otimes q)$   and  $$\nu=\Big\langle \Phi_{\eta\otimes q} ,h_{*}(\eta\otimes q)\otimes (\eta\otimes q)\Big\rangle$$
\end{remark}

We  write  $B_x(s):=\big\{y:\|y-x\|<s\big\},$    $\displaystyle \skrib_{(\rho_x,\rho_y)}^{t}:=B_x\Big(\Big[\frac{c\rho_x}{t(\rho_y)\gamma(\rho_y)}\Big]^{1/\ell}\Big[\int_{D}\|z-x\|^{-\ell}\eta(dz)\Big]^{-1/\ell}\Big)$  and  note that  the  typical  behaviour of  the   empirical Sinr  measure   is  as   	
$$\begin{aligned}
\nu_{(x,\rho_x)}(a)&=\Big\langle \Phi_{\eta\otimes q} ,h_{*}(\eta\otimes q)\otimes (\eta\otimes q)\Big\rangle_{[x, \rho_x]}(a)\\
&= \int_{\skriy}\Phi_{a}\Big((x,\rho_x), (y,\rho_y), \eta\otimes q\Big)\frac{ h_{*}((x,\rho_x),[y,\rho])   \eta(dy) q(d\rho_y)}{\int_{\skriy}h_{*}((x,\rho_x),[y,\rho])\eta(dy)q(d\rho_y)}\\
&= \int_{\R_+} \int_{D} \1_{\skrib_{(\rho_x,\rho_y)}^{\tau}\setminus \skrib_{(\rho_x,\rho_y)}^{a}}(x)\frac{e^{-c\rho_y} h_{*}((x,\rho_x),[y,\rho])   \eta(dy) d\rho_y}{\int_{\R_+}\int_{D}e^{-c\rho_y}h_{*}((x,\rho_x),[y,\rho])\eta(dy) d\rho_y}
\end{aligned}$$

where  $B\setminus A=B\cap A^c$ and  $\1_{\Gamma}(x)$ denote  the  indicator  function on  the  set $\Gamma.$  

 Theorem~\ref{main1b} below   is  a  conditional large  deviation  principle  for  the  empirical link  measure given  the  empirical power  measure,  and  joint  LDP   for  the  empirical  power measure  and empirical link  measure  of the Sinr  network model.    We  define  a  relative  entropy  $\skrih$  by 

\begin{equation}
\begin{aligned}
	\skrih(\omega\|h_*\sigma\otimes\sigma):= \left\{\begin{array}{ll} H(\omega\,\|\,h_{*}\sigma\otimes \sigma)+\Big(\|h_{x}\sigma\otimes\sigma\|-\|\omega\|\Big),  & \mbox{if  $\|\omega\|>0.$  }\\
		\infty & \mbox{otherwise.}		
	\end{array}\right.
\end{aligned}
\end{equation} 

\begin{theorem}	\label{main1b}
		Let   $Y^{\lambda}$  is  an Sinr  network  with rate measure
		$\lambda \eta:D \to [0,1]$ and   a   powered  probability  function  $q$  from  $D$   to  $(0,\infty)$  and  path  loss  function   $\phi(r)=r^{-\ell}, $  for  $\ell>0.$   Suppose  $q$  is an  exponential  distribution  with  parameter $c.$
		
\begin{itemize}

\item[(i)]Then, as $\lambda\rightarrow\infty,$ conditional  on the  event  $M_1^{\lambda}=\sigma,$  $M_2^{\lambda}$  satisfies a  large  deviation principle in the  space 
$\skrim(\skriy\times \skriy)$
with  speed   $\lambda$  and  good rate function

\begin{equation}
I_{\sigma}(\omega)=\frac{1}{2}\skrih(\omega\|h_*\sigma\otimes\sigma)
\end{equation} 

\item[(ii)]Then as  $\lambda\rightarrow\infty,$ the pair
$(M_1^{\lambda},\, M_2^{\lambda})$  satisfies  a  large  deviation principle  in  the  space
$\skrim(\skriy\times \skriy)$
with  speed   $\lambda,$  and  good rate function

\begin{equation}
I(\sigma,\,\omega)=  H\Big(\sigma\,\Big |\eta\otimes q\Big)+\frac{1}{2} \skrih(\omega\|h_*\sigma\otimes\sigma),
\end{equation}

\end{itemize}

where $$h_*\sigma\otimes\sigma((x,\rho_x),(y,\rho_y)))=h_*((x,\rho_x),(y,\rho_y))\sigma((x,\rho_x))\sigma((y,\rho_y)).$$
\end{theorem}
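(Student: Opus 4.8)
The plan is to derive both statements from a single computation of the conditional cumulant generating function of $M_2^{\lambda}$, combined with the unconditional LDP for $M_1^{\lambda}$ already recorded in Theorem~\ref{main1a}. For part~(i) I would first condition on the powered Poisson configuration, so that $M_1^{\lambda}=\sigma$ is fixed. By the very definition of the Sinr network, the edge indicators $\1_{\{(i,j)\in E\}}$ are then \emph{independent} Bernoulli variables with success probability $p_{\lambda}((Y_i,\rho_i),(Y_j,\rho_j))$ given by Proposition~\ref{Pro}. Hence, for a bounded measurable test function $g$ on $\skriy\times\skriy$, the conditional moment generating function factorises over unordered pairs, and
\[
\frac{1}{\lambda}\log\E\Big[e^{\lambda\langle g,\,M_2^{\lambda}\rangle}\,\Big|\,M_1^{\lambda}=\sigma\Big]
=\frac{1}{\lambda}\sum_{\{i,j\}}\log\Big(1+p_{\lambda}\big(e^{g(i,j)+g(j,i)}-1\big)\Big).
\]
Invoking the criticality hypothesis $\lambda^{2}a_{\lambda}p_{\lambda}\to h_{*}$ with $\lambda a_{\lambda}\to1$ (whence $\lambda p_{\lambda}\to h_{*}$), the expansion $\log(1+x)=x+O(x^{2})$, and the identity $M_1^{\lambda}=\sigma$ (so that the double sum is an integral against $\sigma\otimes\sigma$), this converges to the limiting log-MGF
\[
\Lambda_{\sigma}(g)=\tfrac12\int_{\skriy\times\skriy}\big(e^{g(x,y)+g(y,x)}-1\big)\,h_{*}\,d(\sigma\otimes\sigma),
\]
the factor $\tfrac12$ arising from the passage from unordered pairs to the symmetrised measure.

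Next I would verify the hypotheses of the abstract G\"artner--Ellis theorem (Dembo--Zeitouni~\cite{DZ1998}) in the space $\skrim(\skriy\times\skriy)$ equipped with the $\tau$-topology: finiteness and G\^ateaux differentiability of $\Lambda_{\sigma}$, essential smoothness/steepness, lower semicontinuity, and exponential tightness of $M_2^{\lambda}$. The rate function is then the Legendre--Fenchel transform $\Lambda_{\sigma}^{*}(\omega)=\sup_{g}\{\langle g,\omega\rangle-\Lambda_{\sigma}(g)\}$, and a standard duality computation identifies it with $\tfrac12\skrih(\omega\|h_{*}\sigma\otimes\sigma)$, using $\sup_{g}\{\langle g,\mu\rangle-\int(e^{g}-1)\,d\nu\}=H(\mu\|\nu)+(\|\nu\|-\|\mu\|)$. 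This delivers the conditional LDP of~(i).

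For part~(ii) I would combine the unconditional LDP for $M_1^{\lambda}$ of Theorem~\ref{main1a}, with rate $H(\sigma\,|\,\eta\otimes q)$, and the conditional LDP of part~(i), via the method of mixtures as deployed in~\cite{DA2006}. After checking that the conditional rate $\tfrac12\skrih(\omega\|h_{*}\sigma\otimes\sigma)$ is jointly lower semicontinuous in $(\sigma,\omega)$ and that the conditional LDP holds with the requisite uniformity in $\sigma$, the mixtures machinery yields the joint LDP for the pair $(M_1^{\lambda},M_2^{\lambda})$ with the additive good rate function $I(\sigma,\omega)=H(\sigma\,|\,\eta\otimes q)+\tfrac12\skrih(\omega\|h_{*}\sigma\otimes\sigma)$.

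The hard part will be the passage through G\"artner--Ellis in the infinite-dimensional $\tau$-topology: establishing exponential tightness of $M_2^{\lambda}$ and, above all, the steepness/essential smoothness of $\Lambda_{\sigma}$ needed for the full lower bound. A further delicate point is that the conditioning event $\{M_1^{\lambda}=\sigma\}$ is null for a continuous $\sigma$; making the conditional statement rigorous requires the mixture/projective-limit apparatus, exploiting that edge formation depends on the configuration only through pairwise distances and powers, so that one may equivalently work with the joint MGF of $(M_1^{\lambda},M_2^{\lambda})$ from the outset.
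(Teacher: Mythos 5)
Your proposal follows essentially the same route as the paper: for part (i) you compute the limiting conditional log-moment generating function of $M_2^{\lambda}$ (the paper does this via a box decomposition with binomial edge counts, you via direct factorisation over unordered pairs---the same computation under the criticality scaling $\lambda p_{\lambda}\to h_{*}$), apply the G\"artner--Ellis theorem, and identify the Legendre--Fenchel transform with $\tfrac12\skrih(\omega\,\|\,h_{*}\sigma\otimes\sigma)$; for part (ii) you combine this conditional LDP with Theorem~\ref{main1a} by the method of mixtures, exactly as the paper does via Biggins' theorem. Your write-up is, if anything, more explicit about the technical points (steepness in the $\tau$-topology, the null conditioning event $\{M_1^{\lambda}=\sigma\}$) that the paper passes over silently.
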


\begin{theorem}	\label{main1abc}
	Let   $Y^{\lambda}$  is  a critical powered Sinr network  with rate measure
	$\lambda \eta:D \to [0,1]$ and   a   powered  probability  function  $q$  from  $D$   to  $(0,\infty)$  and  path  loss  function   $\phi(r)=r^{-\ell}, $  for  $\ell>0.$  Suppose  $Y$   is  an  Sinr notwork   conditional  on  the event  $\displaystyle \Big\{(M_1^{\lambda}, M_2^{\lambda})=(\sigma,\, \omega)\Big\}.$ Then, as  $\lambda\to \infty$,  the empirical sinr  measure   $ \skril^{1,2}$  satisfies an LDP in the  space 
	$\skrim(  [\tau,\,\infty))$
	with  speed   $\lambda$  and  good rate function
  $$\tilde{J}(\nu)= \frac{1}{2}\int_{\skriy}H\Big(\nu_{(x,\rho_x)}\Big\|\, \big\langle \Phi_{\sigma} ,\omega\big\rangle_{(x,\rho_x)}\Big)\omega_2((dx,d\rho_x)),$$
  
  where  $\omega_2$ denote   second  marginal  of  the  finite  measure $\omega.$
\end{theorem}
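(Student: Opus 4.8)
The plan is to establish this conditional principle directly by the G\"artner--Ellis theorem \cite[Theorem~2.3.6]{DZ1998}, in the spirit of the method of mixtures of \cite{DA2006}. Indeed Theorem~\ref{main1abc} is precisely the conditional input that one later feeds into the mixtures argument to assemble the joint principle of Theorem~\ref{main1ab} out of the pair principle of Theorem~\ref{main1b}(ii), so it must be proved on its own rather than extracted from a joint LDP. First I would describe the conditional law of the marks. Conditioning on the event $\{(M_1^{\lambda},M_2^{\lambda})=(\sigma,\omega)\}$ fixes, to leading exponential order, the powered point configuration through $\sigma$ and the family of links through $\omega$; for each powered point $(x,\rho_x)$ the count $\|N_{\lambda}((x,\rho_x))\|=\lambda\int_{\skriy}M_2^{\lambda}((x,\rho_x),(dy,d\rho_y))$ is thereby pinned at $\lambda\,\omega_2((x,\rho_x))$, and the $\mathrm{SINR}$ values recorded by $\skril_{(x,\rho_x)}^{1,2}$ are attached one per incident link. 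Using Proposition~\ref{Pro} for the link probability $p_{\lambda}$ together with the independent marking of the Poisson process, these incident $\mathrm{SINR}$ marks are, up to events of negligible exponential weight, independent draws from the reference probability measure $\langle\Phi_{\sigma}^{\lambda},\omega\rangle_{(x,\rho_x)}$ of \eqref{Com2}, which by dominated convergence tends to $\langle\Phi_{\sigma},\omega\rangle_{(x,\rho_x)}$; the exponential law of $q$ (mean $1/c$) is what renders this reference explicit through the $e^{-c\rho_y}$ weighting displayed above.

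Next I would compute the conditional logarithmic moment generating functional. For a bounded measurable test family $g=(g_{(x,\rho_x)})_{(x,\rho_x)\in\skriy}$, paired against $\skril^{1,2}$ through the $\omega_2$--weighted integral, I would evaluate $\Lambda(g):=\lim_{\lambda\to\infty}\tfrac1\lambda\log\E\big[\exp(\lambda\langle g,\skril^{1,2}\rangle)\,\big|\,(M_1^{\lambda},M_2^{\lambda})=(\sigma,\omega)\big]$. Because the marks at distinct powered points are conditionally independent, and those incident to a fixed $(x,\rho_x)$ are $\approx\lambda\,\omega_2((x,\rho_x))$ independent samples from $\langle\Phi_{\sigma},\omega\rangle_{(x,\rho_x)}$, a Cram\'er computation at each point followed by integration in $(x,\rho_x)$ yields the explicit limit $\Lambda(g)=\tfrac12\int_{\skriy}\log\big\langle e^{\,g_{(x,\rho_x)}},\,\langle\Phi_{\sigma},\omega\rangle_{(x,\rho_x)}\big\rangle\,\omega_2((dx,d\rho_x))$, where the factor $\tfrac12$ accounts for the symmetric double counting of links in $M_2^{\lambda}$. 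The replacement of $\Phi_{\sigma}^{\lambda}$ and $h_{\lambda}^{D}$ by their limits $\Phi_{\sigma}$ and $h_{*}$, legitimate at the critical scaling $\lambda a_{\lambda}\to1$, is what keeps $\Lambda(g)$ finite and explicit.

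I would then identify the rate function. The functional $\Lambda$ is convex, finite and G\^ateaux differentiable in $g$, so its Fenchel--Legendre transform is computed by a pointwise (in $(x,\rho_x)$) dualisation; the dual of $\log\langle e^{g},\,\cdot\rangle$ against a reference $m$ is the relative entropy $H(\cdot\,\|\,m)$, and carrying the $\tfrac12\,\omega_2$ weight through the variational computation produces exactly $\tilde{J}(\nu)=\tfrac12\int_{\skriy}H\big(\nu_{(x,\rho_x)}\,\big\|\,\langle\Phi_{\sigma},\omega\rangle_{(x,\rho_x)}\big)\,\omega_2((dx,d\rho_x))$. Essential smoothness (steepness) of $\Lambda$, together with lower semicontinuity and compactness of the sublevel sets of relative entropy, then give via G\"artner--Ellis the full large deviation principle at speed $\lambda$ with good rate function $\tilde{J}$; goodness of $\tilde{J}$ follows because an $\omega_2$--weighted superposition of good rate functions is again good on the stated space.

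The hardest part will be making the conditioning rigorous: the event $\{(M_1^{\lambda},M_2^{\lambda})=(\sigma,\omega)\}$ sits on the large deviation scale, so it must be handled through shrinking open and closed neighbourhoods of $(\sigma,\omega)$ rather than literal conditioning, and one must show that the weak dependence induced by the interference term in $\mathrm{SINR}$, which couples all points through the denominator $N_0+\gamma^{(\lambda)}(\rho)\sum_{i}\rho_i\phi(\|\cdot\|)$, does not disturb the decoupled Cram\'er estimate. I expect to control this exactly as in the passages $\Phi_{\sigma}^{\lambda}\to\Phi_{\sigma}$ and $h_{\lambda}^{D}\to h_{*}$ above, using dominated convergence and the criticality $\lambda a_{\lambda}\to 1$ to absorb the interference fluctuations into a vanishing exponential error, so that the limiting moment generating functional, and hence the rate function $\tilde{J}$, is left unaffected.
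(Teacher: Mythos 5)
Your proposal is correct and follows essentially the same route as the paper: its Lemma~\ref{Com1a} identifies the conditional law of the incident SINR marks as $\big\langle \Phi_{\sigma}^{\lambda},\omega\big\rangle_{(x,\rho_x)}$, its Lemma~\ref{Com1} computes the same limiting conditional log-moment generating functional $\tfrac12\big\langle \log\big\langle e^{g},\langle\Phi_{\sigma},\omega\rangle_{\cdot}\big\rangle,\omega_2\big\rangle$ via conditional independence across points and $N_{\lambda}/\lambda\to\omega_2$, and the conclusion follows from the G\"artner--Ellis theorem together with the variational formulation of relative entropy, exactly as you outline. Your closing paragraph on replacing literal conditioning by shrinking neighbourhoods and controlling the interference coupling addresses rigor that the paper simply takes for granted, but it does not change the argument's structure.
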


\section{ Proof  of  Theorem~\ref{main1b}  and  Theorem~\ref{main1ab}}\label{proofmain1b}\label{Sec3}

\subsection{Proof  of   Theorem~\ref{main1b}(i)  by  Gartner-Ellis  Theorem }\label{proofmain}

Suppose   $B_1,...,B_n$   is  a  decomposition  of  the  space  $D\times \R_{+}.$   Observe  that,  for  every $(x,y)\in B_i\times B_j,\, i,j=1,2,3,...,n,$  $\lambda M_2^{\lambda}(x,y)$  given  $\lambda M_1^{\lambda}(x)=\lambda\sigma(x)$  is  binomial  with  parameters  $\lambda^2\sigma(x)\sigma(y)/2$  and  $p_{\lambda}(x,y).$ Let  $q$  be  the  exponential  distribution  with  parameter $c.$  We  recall  the  function  $R_{\lambda}^{D}  $  from  the  previous  sections  as  follows:

 $$h_{\lambda}^{D} ((x,\rho_x),(y,\rho_y))= \int_{D} \Big[\sfrac{ \tau^{(\lambda)}(\rho_x) \gamma^{(\lambda)}(\rho_x)  }{\tau^{(\lambda)}(\rho_x) \gamma^{(\lambda)}(\rho_x)+(\|z\|^{\ell}/\|x-y \|^{\ell})} + \sfrac{ \tau^{(\lambda)}(\rho_y) \gamma^{(\lambda)}(\rho_y)  }{\tau^{(\lambda)}(\rho_y) \gamma^{(\lambda)}(\rho_y)+(\|z\|^{\ell}/\|y-x \|^{\ell})}\Big] \eta(dz).$$

Lemma~\ref{main1c}  is  key  component    in  the  application  of  the  Gartner-Ellis  Theorem,  see  example,  
\begin{lemma}\label{randomg.LDM1b}\label{main1c}
	Let   $Y^{\lambda}$  be  an Sinr  network  with rate measure
	$\lambda \eta:D \to [0,1]$ and   a   powered  probability  function  $q$  from  $D$   to  $(0,\infty)$  and  path  loss  function   $\phi(r)=r^{-\ell}, $  for  $\ell>0 ,$  conditional  on the  event  $M_1^{\lambda}=\sigma.$  Let $ g:\skriy\times \skriy\to \R$ be  bounded  function.  Then,
	
	$$\begin{aligned}\lim_{\lambda\to\infty}\frac{1}{\lambda}\log\me \Big\{e^{\lambda\langle g, \, M_2^{\lambda}\rangle }\Big | M_1^{\lambda}=\sigma\Big\}&=\frac{1}{2}\lim_{n\to\infty}\sum_{j=1}^{n}\sum_{i=1}^{n}\Big\langle 1-e^{g},\, h_*\sigma\otimes\sigma\Big\rangle _{B_i \times B_j}\\
	&=\frac{1}{2}\Big\langle 1-e^{g},\, h_*\sigma\otimes\sigma\Big\rangle _{\skriy \times \skriy}.
	\end{aligned}$$
\end{lemma}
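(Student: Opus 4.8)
The plan is to produce this cumulant limit as the basic input to the Gärtner--Ellis theorem, exploiting the conditional independence of the links recorded just before the statement. First I would fix the finite partition $B_1,\dots,B_n$ of $D\times\R_+$ and approximate the bounded function $g$ by the step function $\sum_{i,j} g(x_i,y_j)\1_{B_i\times B_j}$, where $(x_i,\rho_{x_i})\in B_i$ is a representative point; since $g$ is bounded this replacement is uniform up to an error that vanishes as the mesh of the partition shrinks. On the event $\{M_1^{\lambda}=\sigma\}$ the cell occupation numbers are frozen at $\lambda\sigma(B_i)$, so that the number $L_{ij}^{\lambda}$ of links joining $B_i$ to $B_j$ is, exactly as recalled above, binomial with parameters $\lambda^2\sigma(B_i)\sigma(B_j)/2$ and $p_{\lambda}$, and these counts are conditionally independent across the (unordered) cell pairs.

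Because of this independence the conditional moment generating function factorises, so that $\frac1\lambda\log\me\{e^{\lambda\langle g,M_2^{\lambda}\rangle}\mid M_1^{\lambda}=\sigma\}$ becomes a sum over cell pairs of $\frac1\lambda\log\me\,e^{(g_{ij}+g_{ji})L_{ij}^{\lambda}}$, where each summand is the log-moment generating function of a binomial variable, namely $\frac{n_{ij}}{\lambda}\log\bigl(1+p_{\lambda}(e^{g_{ij}+g_{ji}}-1)\bigr)$ with $n_{ij}=\lambda^2\sigma(B_i)\sigma(B_j)/2$. Here I would invoke criticality: combining $\lambda a_{\lambda}\to1$ with $\lambda^2 a_{\lambda}p_{\lambda}\to h_{*}$ gives $\lambda p_{\lambda}\to h_{*}$, hence $p_{\lambda}\to0$, and the expansion $\log(1+u)=u+O(u^2)$ then yields, after multiplication by $n_{ij}/\lambda$, the cell contribution $\tfrac12\langle 1-e^{g},\,h_{*}\sigma\otimes\sigma\rangle_{B_i\times B_j}$ appearing in the lemma. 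Summing over $1\le i,j\le n$ and letting $\lambda\to\infty$ produces the first displayed equality, and refining the partition so that $n\to\infty$ upgrades the Riemann sum to the integral $\tfrac12\langle 1-e^{g},\,h_{*}\sigma\otimes\sigma\rangle_{\skriy\times\skriy}$, using that $h_{*}\sigma\otimes\sigma$ is a finite measure and $1-e^{g}$ is bounded.

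The main obstacle is controlling the error in $\log(1+u)=u+O(u^2)$ uniformly across the cells and justifying the interchange of the $\lambda\to\infty$ and $n\to\infty$ limits. The quadratic error in a single cell is of order $n_{ij}p_{\lambda}^2=O(\lambda^2)\cdot O(\lambda^{-2})=O(1)$, so that after the factor $1/\lambda$ it vanishes individually; but one sums over $\asymp n^2$ cells, so the estimate must be made uniform in $\lambda$ with a constant that remains summable as the mesh shrinks. This is exactly where the boundedness of $g$ and the finiteness of $h_{*}\sigma\otimes\sigma$ are essential, and where a diagonal-subsequence argument (or an explicit two-sided Gärtner--Ellis upper/lower estimate carried out before refining the partition) is the clean way to conclude.

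A secondary point is to make rigorous the conditioning on the probability-zero event $\{M_1^{\lambda}=\sigma\}$. I would read this through the binomial description supplied above, equivalently via the method of mixtures used for Theorem~\ref{main1a}, so that the conditional law of the link pattern is genuinely that of independent Bernoulli$(p_{\lambda})$ variables over the admissible pairs. The diagonal cells $i=j$, where the number of admissible unordered pairs is $n_{ii}\approx\lambda^2\sigma(B_i)^2/2$, are handled by the identical estimate and contribute the diagonal part of the limiting integral, so no separate treatment is needed once the off-diagonal bound is uniform.
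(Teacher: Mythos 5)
Your proposal follows essentially the same route as the paper's own proof: conditioning on $M_1^{\lambda}=\sigma$ to obtain independent binomial link counts over the cells $B_i\times B_j$ of the partition, factorizing the conditional moment generating function, expanding the logarithm using $\lambda p_{\lambda}\to h_{*}$ (which follows from criticality $\lambda a_{\lambda}\to 1$ together with $\lambda^{2}a_{\lambda}p_{\lambda}\to h_{*}$), and then passing to the limit in $\lambda$ and refining the partition. One small point: your expansion $\log(1+u)\approx u$ with $u=p_{\lambda}(e^{g}-1)$ literally produces $\tfrac12\big\langle e^{g}-1,\,h_{*}\sigma\otimes\sigma\big\rangle$ rather than $\tfrac12\big\langle 1-e^{g},\,h_{*}\sigma\otimes\sigma\big\rangle$, but this sign slip is present in exactly the same place in the paper's proof, so your argument matches it step for step.
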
	
\begin{Proof}
 Now  we  observe  that 
	$$ \me\Big \{ e^{\int \int \lambda g(x,y)M_2^{\lambda}(dx,dy)/2} \Big |M_1^{\lambda}=\sigma\Big \}= \me\Big\{\prod_{x \in \skriy} \prod_{y \in \skriy} e^{\lambda g(x,y)M_2^{\lambda}(dx,dy)/2} \Big\}$$
	
	$$ \me\Big \{\prod_{x \in \skriy} \prod_{y \in \skriy} e^{g(x,y)\lambda M_2^{\lambda}(dx,dy/2)} = \prod_{i=1} \prod_{j=1} \prod_{x \in B_i} \prod_{y \in B_j} \me\Big\{e^{g(x,y) \lambda M_2^{\lambda}(dx,dy)/2  }\Big \} $$
	
	$$ \log \Big \{e^{\lambda \langle g,M_2^{\lambda}\rangle/2}\Big|M_1^{\lambda}=\sigma \Big\} = \sum_{j=1}^{n} \sum_{i=1}^{n}\int_{B_j}\int_{B_i}\log\Big[1-p(x,y)+p(x,y)e^{g(x,y)}\Big]^{\lambda^{2} \sigma \otimes \sigma (dx,dy)/2} $$

	By the  dominated convergence theorem 
	$$   \frac{1}{\lambda} \log E \{e^{\lambda \langle g,M_2^{\lambda}\rangle/2 } \mid M_1^{\lambda}=\sigma \} = \frac{1}{\lambda}\sum_{j=1} \sum_{i=1} \int_{B_i} \int_{B_j} \log\Big [1-\big(1-e^{g(x,y)}) p_{\lambda}(x,y) + o (\lambda ) \Big]^{ \lambda^2 \sigma \otimes \sigma (dx,dy)/2} $$
	$$  \frac{1}{\lambda} \log \me \{e^{\lambda \langle g,M_2^{\lambda}\rangle /2} \| M_1^{\lambda}=\sigma \} = \lim_{\lambda \rightarrow \infty } \sum_{j=1} \sum_{i=1}\int_{B_i} \int_{B_j} \log \Big[1-(1-e^{g(x,y)}) p_{\lambda}(x,y) + o (\lambda ) \Big]^{ \lambda \sigma \otimes \sigma (dx,dy)/2} $$
	
	$$\lim_{\lambda \rightarrow \infty} \frac{1}{\lambda} \log \me \Big\{e^{\lambda \langle g,M_2^{\lambda}\rangle /2} \| M_1^{\lambda}=\sigma\Big \} = \frac{1}{2} \sum_{j=1} \sum_{i=1} \int_{B_i} \int_{B_j}\Big [(1-e^{g(x,y)})h_{*}(x,y)\sigma \otimes \sigma (dx,dy)\Big] $$
	
	$$\lim_{\lambda \rightarrow \infty}\frac{1}{\lambda} \log \me\{e^{\lambda \langle g,M_2^{\lambda}\rangle/2} \mid M_1^{\lambda}=\sigma \} 
	= \frac{1}{2}\sum_{j=1}^{n}\sum_{i=1}^{n} \Big\langle 1-e^{g},\, h_*\sigma\otimes\sigma\Big\rangle _{B_i \times B_j} $$
	
	$$ \begin{aligned}
	\lim_{n \rightarrow \infty} \lim_{\lambda \rightarrow \infty} \frac{1}{\lambda} \log \me \{e^{\lambda \langle g,M_2^{\lambda}\rangle/2 } \Big |M_1^{\lambda}=\sigma \}& =\frac{1}{2} \lim_{n \rightarrow \infty } \sum_{j=1}^{n}\sum_{i=1}^{n} \Big\langle 1-e^{g},\, h_*\sigma\otimes\sigma\Big\rangle _{B_i \times B_j}\\
	&	=\frac{1}{2} \Big\langle 1-e^{g},\, h_*\sigma\otimes\sigma\Big\rangle _{\skriy \times \skriy}  
\end{aligned}$$

Hence,	by Gartner-Ellis theorem, conditional  on the  event $\Big\{M_{1}^{\lambda}= \sigma\Big\}$, $M_2^{\lambda}$ obey a  large  deviation  principle with speed $\lambda$  and  rate function
		$$ I_{\sigma}(\omega) = \frac{1}{2}\sup_{g} \Big\{  \Big\langle g,\, \omega\Big\rangle _{\skriy \times \skriy}+  \Big\langle 1-e^{g},\, h_*\sigma\otimes\sigma\Big\rangle _{\skriy\times \skriy}\Big\}$$ 
	
	 which   when solved,  see  example   \cite{DA2006},	would clearly   reduces  to  the good rate function  given by 
		\begin{equation}
		I_{\sigma}(\omega)=  \frac{1}{2}\skrih(\omega\|h_* \sigma\otimes\sigma).
	\end{equation}
		
\end{Proof}

\subsection{Proof  of  Theorem~\ref{main1abc}  by  Gartner-Ellis  Theorem}

The  first  step  in  proof  of  Theorem~\ref{main1ab} is a  large  deviation principle  for  the  sequence  of  measures $\Big(\skril_{(x,\rho_x)}^{1,2},(x,\rho_x)\in\skriy\Big)$  conditional  on  the set   $$\Big\{(M_1^{\lambda},M_2^{\lambda})=(\sigma,\,\omega)\Big\}.$$

\begin{lemma}\label{Com1a}
	Let   $Y^{\lambda}$  be  a critical powered Sinr  networks  with rate measure
	$\lambda \eta:D \to [0,1]$ and   a   powered  probability  function  $q$  from  $D$   to  $(0,\infty)$  and  path  loss  function   $\phi(r)=r^{-\ell}, $  for  $\ell>0.$   Then,  for  every $(x,\rho_x)$,
	we  have  
	\begin{equation}\label{Com3a}
	\lim_{\lambda\to \infty}\P\Big\{\tau^{(\lambda)}(\rho_x)\le Sinr([Y_i,\rho_i], (x,\rho_x),\, M_1^{\lambda})\le a(\rho_x), \, (x,\rho_x)\in \skriy\Big |(M_1^{\lambda},\,M_2^{\lambda})=(\sigma,\omega)\Big\}=\Big\langle \Phi_{\sigma}\, ,\,\omega \Big\rangle (a)	\end{equation}
\end{lemma}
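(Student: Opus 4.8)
The plan is to exploit the explicit representation of the empirical $\mathrm{SINR}$ measure recorded in Section~\ref{Sec2}: on the event $\{(M_1^{\lambda},M_2^{\lambda})=(\sigma,\omega)\}$ the quantity $\skril_{(x,\rho_x)}^{1,2}(a)$ coincides, up to the normalising mass $\|N_{\lambda}((x,\rho_x))\|=\lambda\omega_2((x,\rho_x))$, with $\langle\Phi_{\sigma}^{\lambda},\omega\rangle_{(x,\rho_x)}(a)$, together with the conditional independence structure of the $\mathrm{SINR}$ graph already exploited in Lemma~\ref{main1c}. First I would fix a decomposition $B_1,\dots,B_n$ of $D\times\R_+$ and recall that, conditional on $M_1^{\lambda}=\sigma$, the links between $B_i$ and $B_j$ arise from independent Bernoulli trials with success probability $p_{\lambda}$. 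Conditioning additionally on $M_2^{\lambda}=\omega$ then fixes the number of edges between blocks at order $\lambda\,\omega(B_i\times B_j)$, so that the neighbours of a point $(x,\rho_x)$ are distributed across the blocks in the proportions dictated by $\omega((dy,d\rho_y),(x,\rho_x))$.

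Second, I would replace the genuine, configuration-dependent interference entering $\mathrm{SINR}([Y_i,\rho_i],(x,\rho_x),M_1^{\lambda})$ by its mean-field value against $\sigma$. Since the interference sum satisfies $\sum_{i}\rho_i\phi(\|Y_i-x\|)=\lambda\int\rho\,\phi(\|z-x\|)\,M_1^{\lambda}(dz,d\rho)$, conditioning on $M_1^{\lambda}=\sigma$ turns $\mathrm{SINR}((y,\rho_y),(x,\rho_x),\sigma)$ into a deterministic function of the block containing $(y,\rho_y)$ once the partition is fine. Consequently the event $\{\tau^{(\lambda)}(\rho_x)\le \mathrm{SINR}\le a(\rho_x)\}$ for a connected neighbour is captured exactly by the indicator $\Phi_a^{\lambda}\big((y,\rho_y),(x,\rho_x),\sigma\big)$ appearing in the definition of $\langle\Phi_{\sigma}^{\lambda},\omega\rangle$.

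Third, by a conditional law of large numbers for the independent trials above (the same binomial computation as in Lemma~\ref{main1c}, but now weighting each trial by the extra indicator $\Phi_a^{\lambda}$), the conditional probability that a typical neighbour of $(x,\rho_x)$ has $\mathrm{SINR}$ in $[\tau^{(\lambda)}(\rho_x),a(\rho_x)]$ concentrates on the $\omega$-weighted fraction
$$\frac{1}{\omega_2((x,\rho_x))}\int_{\skriy}\Phi_a^{\lambda}\big((y,\rho_y),(x,\rho_x),\sigma\big)\,\omega((dy,d\rho_y),(x,\rho_x))=\Big\langle\Phi_{\sigma}^{\lambda},\omega\Big\rangle_{(x,\rho_x)}(a).$$
Letting $n\to\infty$ and then $\lambda\to\infty$, and invoking $\lim_{\lambda\to\infty}\Phi_a^{\lambda}=\Phi_a$ together with the dominated convergence recorded after \eqref{Com2}, the right-hand side tends to $\langle\Phi_{\sigma},\omega\rangle_{(x,\rho_x)}(a)$, which is precisely \eqref{Com3a}.

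The step I expect to be the main obstacle is the mean-field replacement: one must show that the true interference entering the connection rule can be substituted by its integral against $\sigma$ uniformly enough to pass the limit through the \emph{discontinuous} indicator $\Phi_a^{\lambda}$. This is where the critical scaling $\lambda a_{\lambda}\to 1$ and the assumed convergence $\lambda^2 a_{\lambda}p_{\lambda}\to h_{*}$ are needed, in order to control the fluctuations of the interference sum. Extra care is required at the boundary $\{\mathrm{SINR}=a(\rho_x)\}$, where the indicator jumps; away from this set the exponential law of $q$ makes $a\mapsto\langle\Phi_{\sigma},\omega\rangle_{(x,\rho_x)}(a)$ continuous, so that the boundary is a null set for the limiting measure and contributes nothing.
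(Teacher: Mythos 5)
Your third-step display is in fact the entirety of the paper's proof: conditional on $(M_1^{\lambda},M_2^{\lambda})=(\sigma,\omega)$, the probability in \eqref{Com3a} equals \emph{exactly}
\begin{equation*}
\frac{1}{\omega_2((x,\rho_x))}\int_{\skriy}\Phi_{a}^{\lambda}\Big((y,\rho_y)\,,\,(x,\rho_x)\,,\,\sigma\Big)\,\omega((dy,d\rho_y),(x,\rho_x))
=\Big\langle\Phi_{\sigma}^{\lambda}\, ,\,\omega\Big\rangle_{(x,\rho_x)}(a),
\end{equation*}
and the lemma then follows by letting $\lambda\to\infty$ and using $\Phi_{a}^{\lambda}\to\Phi_{a}$ with the dominated convergence noted after \eqref{Com2}. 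So your conclusion is correct; the issue is everything you wrap around this identity.

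The block decomposition, the Bernoulli-trial picture, the conditional law of large numbers, and the double limit $n\to\infty$, $\lambda\to\infty$ are all unnecessary, and the ``main obstacle'' you single out is a phantom. The SINR in the statement of the lemma is, by definition, $Sinr([Y_i,\rho_i],(x,\rho_x),M_1^{\lambda})$ --- the interference is evaluated against the empirical measure $M_1^{\lambda}$ itself, not against the raw point configuration --- so on the conditioning event it is \emph{identically} the deterministic quantity $Sinr((y,\rho_y),(x,\rho_x),\sigma)$ entering $\Phi_a^{\lambda}$. There is no mean-field replacement to justify, no interference fluctuation to control, and the critical-scaling hypotheses $\lambda a_{\lambda}\to 1$ and $\lambda^{2}a_{\lambda}p_{\lambda}\to h_{*}$ are never used in this lemma (they matter for the pair-measure asymptotics in Lemma~\ref{main1c}, not here). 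Similarly, conditioning on $M_2^{\lambda}=\omega$ does not merely pin down edge counts between blocks ``at order $\lambda\,\omega(B_i\times B_j)$'': the empirical pair measure is purely atomic and determines the neighbourhood of $(x,\rho_x)$ exactly, so the fraction of neighbours with SINR in $[\tau^{(\lambda)}(\rho_x),a(\rho_x)]$ \emph{is} the displayed ratio, with no concentration argument and hence no boundary-set issue at $\{Sinr=a(\rho_x)\}$ to worry about for fixed $\lambda$. In short: your proof reaches the right formula, but only because steps one through three are exact identities masquerading as asymptotic statements; stripped of that scaffolding, it collapses to the paper's direct computation plus dominated convergence.
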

\begin{proof}
	
	We  compute the  probability 
	$$\begin{aligned}
	\P\Big\{\tau^{(\lambda)}(\rho_x)&\le Sinr([Y_i,\rho_i], (x,\rho_x),\, M_1^{\lambda})\le a(\rho_x)\Big |(M_1^{\lambda},\,M_2^{\lambda})=(\sigma,\omega)\Big\}\\
&=\frac{1}{\omega_2((x,\rho_x))}\int_{\skriy}\Phi_{a}^{\lambda}\Big((x,\rho_x)\,,\, (y,\rho_y)\,,\,\sigma\Big)\omega([dy,\,d\rho_y], (x,\rho_x))\\
&=\Big\langle \Phi_{\sigma}\, ,\,\omega \Big\rangle_{(x\, ,\,\rho_x)} (a)	\end{aligned}$$
\end{proof}
Taking  limits  as  $\lambda\to\infty$   on  both sides    we  have   \ref{Com3a}  which  ends  the  proof  of  Lemma~\ref{Com1a}.
\begin{lemma}\label{Com1}
		Let   $Y^{\lambda}$  is  a critical powered Sinr  graph  with rate measure
	$\lambda \eta:D \to [0,1]$ and   a   powered  probability  function  $q$  from  $D$   to  $(0,\infty)$  and  path  loss  function   $\phi(r)=r^{-\ell}, $  for  $\ell>0.$   Suppose  $q$  is  an  exponential  distribution  with  parameter $c.$ Then,  for  every $(x,\rho_x)$,
	we  have  	
\begin{equation}\label{Com3}
	\lim_{\lambda\to\infty}\frac{1}{\lambda}\log\me \Big\{e^{\int_{\skriy}N_{\lambda}((dx,d\rho_x))\langle g, \, \skril_{(x,\rho_x)}^{1,2}\rangle }\Big |( M_1^{\lambda},\, M_2^{\lambda})
=(\sigma,\,\omega)\Big\}=\frac{1}{2}\Big\langle \log\Big\langle e^{g}\,,\,\Big\langle \Phi_{\sigma}\, ,\,\omega \Big\rangle_{\cdot} \Big\rangle_{[\tau,\infty)} \, ,\, \omega_2\Big\rangle_{\skriy}
\end{equation}
\end{lemma}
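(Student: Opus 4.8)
The plan is to prove \eqref{Com3} by computing the limiting conditional cumulant generating function of $\skril^{1,2}$ directly, following the scheme of Lemma~\ref{main1c}: I would exploit the block decomposition $B_1,\dots,B_n$ of $D\times\R_+$ together with the conditional independence of the SINR values, so that the resulting limit can later be fed into the Gärtner--Ellis theorem to obtain Theorem~\ref{main1abc}.

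First I would unfold the exponent into a sum over nodes. Because
$\|N_{\lambda}((x,\rho_x))\|\,\langle g,\skril_{(x,\rho_x)}^{1,2}\rangle=\sum_{i\in N_{\lambda}((x,\rho_x))}g\big(SINR(Y_i^{\lambda},(x,\rho_x),M_1^{\lambda})\big)$,
integrating $N_{\lambda}$ over $\skriy$ rewrites $\int_{\skriy}N_{\lambda}((dx,d\rho_x))\langle g,\skril_{(x,\rho_x)}^{1,2}\rangle$ as a sum of $g$ evaluated at the SINR attached to each directed edge of the network. Conditioning on $\{(M_1^{\lambda},M_2^{\lambda})=(\sigma,\omega)\}$ and grouping the points by the block they fall into, the nodes lying in a common block are exchangeable, so in the limit $\lambda\to\infty$ the SINR values recorded at the neighbours of a fixed node become conditionally independent; by Lemma~\ref{Com1a} each such value has limiting conditional law $\langle\Phi_{\sigma},\omega\rangle_{(x,\rho_x)}$.

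Next I would factorize the conditional expectation. As in the proof of Lemma~\ref{main1c}, the product over blocks and over the neighbours inside a block gives
$$\me\Big\{e^{\int_{\skriy}N_{\lambda}((dx,d\rho_x))\langle g,\skril_{(x,\rho_x)}^{1,2}\rangle}\,\Big|\,(M_1^{\lambda},M_2^{\lambda})=(\sigma,\omega)\Big\}=\prod_{(x,\rho_x)}\Big(\big\langle e^{g},\langle\Phi_{\sigma}^{\lambda},\omega\rangle_{(x,\rho_x)}\big\rangle_{[\tau,\infty)}\Big)^{\frac12\|N_{\lambda}((x,\rho_x))\|},$$
the exponent carrying the factor $\frac12$ because $M_2^{\lambda}$ is the symmetrized link measure and every undirected edge is counted twice by $N_{\lambda}$ --- the same bookkeeping that produced the $\frac12$ in Lemma~\ref{main1c}. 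Taking logarithms, dividing by $\lambda$, substituting $\|N_{\lambda}((x,\rho_x))\|=\lambda\,\omega_2((x,\rho_x))$, and invoking the dominated convergence $\Phi_a^{\lambda}\to\Phi_a$ recorded just before Theorem~\ref{main1ab}, the right-hand side converges to
$$\tfrac12\int_{\skriy}\log\big\langle e^{g},\langle\Phi_{\sigma},\omega\rangle_{(x,\rho_x)}\big\rangle_{[\tau,\infty)}\,\omega_2((dx,d\rho_x))=\tfrac12\Big\langle\log\big\langle e^{g},\langle\Phi_{\sigma},\omega\rangle_{\cdot}\big\rangle_{[\tau,\infty)},\,\omega_2\Big\rangle_{\skriy},$$
which is precisely \eqref{Com3}.

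The step I expect to be the main obstacle is the justification of the factorization: one must show that, conditionally on $(M_1^{\lambda},M_2^{\lambda})=(\sigma,\omega)$, the neighbour SINR values are asymptotically independent, and that replacing $\|N_{\lambda}\|$ and $\Phi^{\lambda}$ by their limits introduces errors uniform enough to survive both the refinement $n\to\infty$ of the block decomposition and the limit $\lambda\to\infty$. As in Lemma~\ref{main1c} this rests on the within-block exchangeability and dominated convergence, but controlling the $o(\lambda)$ terms in the underlying conditional binomial approximation uniformly over the blocks, and interchanging the two limits, is where the genuine work lies.
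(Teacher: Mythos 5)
Your proposal follows essentially the same route as the paper's proof: the paper likewise unfolds the exponent into a sum over neighbours, factorizes the conditional expectation into the product $\prod_{(x,\rho_x)}\big(\int_{\tau}^{\infty}e^{g(a)}\,\langle\Phi_{\sigma}^{\lambda},\omega\rangle_{(x,\rho_x)}(da)\big)^{N_{\lambda}((x,\rho_x))/2}$ via conditional independence of the neighbour SINR values (with conditional law $\langle\Phi_{\sigma}^{\lambda},\omega\rangle_{(x,\rho_x)}$ as in Lemma~\ref{Com1a}), and then takes the normalized logarithm using $N_{\lambda}/\lambda\to\omega_2$ and $\Phi_a^{\lambda}\to\Phi_a$ by dominated convergence. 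The only difference is that the paper asserts this conditional independence as exact at every finite $\lambda$ rather than deriving it asymptotically through block exchangeability, so the refinement and limit-interchange issues you flag as the main obstacle do not arise in its argument.
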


\begin{proof}
	We  observe  that   $\Big(\skril_{(x,\rho_x)}^{1,2}|N((x,\rho_x)),\,(x,\rho_x)\in\skriy\Big)$ are  independent    distributed  as   $$\Big(\Big\langle \Phi_{\sigma}^{\lambda}\, ,\,\omega \Big\rangle_{(x\, ,\,\rho_x)},\,(x,\rho_x)\in\skriy\Big).$$  
	\begin{equation}\label{Com4}
	\begin{aligned}\me \Big\{e^{\int_{\skriy}\Big\langle g, \, \skril_{(x,\rho_x)}^{1,2}\Big\rangle N_{\lambda}((dx,d\rho_x))/2 }\Big |( M_1^{\lambda},\, M_2^{\lambda})&=(\sigma,\,\omega)\Big\}\\
	&=\prod_{(dx,d\rho_x)\in\skriy}\me_{\Big\langle \Phi_{\sigma}^{\lambda}\, ,\,\omega \Big\rangle_{[x\, ,\,\rho_x]}}\Big[\prod_{i\in N_{\lambda}([dx,dx])/2}e^{g(SINR(Y_i^{\lambda},(x,\rho_x),\sigma))}\Big]\\
	&=\prod_{(dx,d\rho_x)\in\skriy}\Big(\me_{\Big\langle \Phi_{\sigma}^{\lambda}\, ,\,\omega \Big\rangle_{[x\, ,\,\rho_x]}}\Big[e^{g(SINR(Y_i^{\lambda},(x,\rho_x),\sigma))}\Big]\Big)^{N_{\lambda}((dx,d\rho_x))/2}\\
	&=\prod_{(dx,d\rho_x)\in\skriy}\Big(\int_{\tau}^{\infty} e^{g(a)}\, \Big\langle \Phi_{\sigma}^{\lambda}\, ,\,\omega \Big\rangle_{(x\, ,\,\rho_x)} (da)\Big)^{N_{\lambda}((dx,d\rho_x))/2}
	\end{aligned}
	\end{equation}
	
	Now  taking limit of   normalized logarithm  of \ref{Com3}   and  observing  that  $N_{\lambda}((dx,d\rho_x))/\lambda\to \omega_2((dx,d\rho_x)),$    $\Big\langle \Phi_{\sigma}^{\lambda}\, ,\,\omega \Big\rangle_{(x\, ,\,\rho_x)}\to\Big\langle \Phi_{\sigma}\, ,\,\omega \Big\rangle_{(x\, ,\,\rho_x)}$ as $\lambda\to\infty$   we  have \ref{Com3},  which  ends  the  proof  of  Lemma~\ref{Com1}
\end{proof}

Now,  by  the  Gartner-Ellis Theorem, Conditional  on the  event  $\Big\{(M_1^{\lambda},M_1^{\lambda})=(\sigma,\omega)\Big\}$,the  probability measure  $\skril^{1,2} $   obeys  an  LDP  with  speed  $\lambda$   and  rate  function  $$\tilde{J}(\nu)=\frac{1}{2}\sup_{g}\Big\{\Big\langle \Big\langle g,\,\nu_{\cdot}\Big\rangle_{[\tau,\infty]},\,\omega_2\Big\rangle_{\skriy} -\Big\langle \log\Big\langle e^{g}\,,\,\Big\langle \Phi_{\sigma}\, ,\,\omega \Big\rangle_{\cdot} \Big\rangle_{[\tau,\infty)} \, ,\, \omega_2\Big\rangle_{\skriy}\Big\}.$$

Using  the  variational  formulation  of  relative  entropy  we  have  that  
$$\tilde{J}(\nu)=\frac{1}{2}\int_{\skriy} H\Big(\nu_{(x,\rho_x)}\Big\|\,\Big\langle \Phi_{\sigma}\, ,\,\omega \Big\rangle_{(x\, ,\,\rho_x)}\Big)\omega_2((dx\, ,\,d\rho_x)),$$    which  proves  Theorem~\ref{main1ab}.

\section{ Proof of  Theorem~\ref{main1a}(ii)   and  Theorem~\ref{main1ab}  by  Method  of  Mixtures}\label{Sec4}
For any $\lambda\in (0,\infty)$ we define
$$\begin{aligned}
\skrim_{\lambda}(\skriy) & := \Big\{ \sigma\in \skrim(\skriy) \, : \, \lambda\sigma(x) \in \N \mbox{ for all } x\in \skriy\Big\},\\
\tilde \skrim_{\lambda }(\skriy\times \skriy) & := \Big\{ \omega\in
\tilde\skrim_*(\skriy\times \skriy) \, : \, 
\lambda \,\omega(x,y) \in \N,\,  \mbox{ for all } \, x,y\in \skriy
\Big\}\, .
\end{aligned}$$

We denote by
$\Theta_{\lambda}:=\skrim_{\lambda }(\skriy)$
and
$\Theta:=\skrim(\skriy)$.
With
$$\begin{aligned}
P_{ \sigma_{\lambda}}^{(\lambda)}(\eta_{\lambda}) & := \prob\big\{M_2^{\lambda}=\eta_{\lambda} \, \big| \, M_1^{\lambda}=\sigma_{\lambda}\big\}\, ,\\
P^{(\lambda)}(\sigma_{\lambda}) & :=
\prob\big\{M_1^{\lambda}=\sigma_{\lambda}\big\}
\end{aligned}$$
$$P_{(\sigma_{\lambda},\omega_{\lambda})}^{(\lambda)}(\nu_{(x,\rho_x)}):=\P\Big\{\skril_{(x,\rho_x)}^{1.2}=\nu_{(x,\rho_x)} \big|(M_1^{\lambda},M_2^{\lambda})=(\sigma_{\lambda},\omega_{\lambda})\Big\}$$
the joint distribution of $M_1^{\lambda}$ and $M_2^{\lambda}$ is
the mixture of $P_{ \sigma_{\lambda}}^{(\lambda)}$ with
$P^{(\lambda)}(\sigma_{\lambda}),$   and  the  joint  distribution  of  $\skril^{1,2}$, $M_1^{\lambda}$  and  $M_1^{\lambda} $ is  a mixture  of  $\tilde{P}^{\lambda}$  with    $P_{(\sigma_{\lambda},\omega_{\lambda})}^{(\lambda)}$ as follows: 
\begin{equation}\label{randomg.mixture}
d\tilde{P}^{\lambda}( \sigma_{\lambda}, \eta_{\lambda}):= dP_{	\sigma_n}^{(\lambda)}(\eta_{\lambda})\, dP^{(\lambda)}( \sigma_{\lambda}).\,
\end{equation}

$$dP_{\lambda}(\nu, \sigma_{\lambda}, \eta_{\lambda}):=dP_{(\sigma_{\lambda},\omega_{\lambda})}^{(\lambda)}(\nu)d\tilde{P}^{\lambda}( \sigma_{\lambda}, \eta_{\lambda}).$$

(Biggins, Theorem 5(b), 2004) gives criteria for the validity of
large deviation principles for the mixtures and for the goodness of
the rate function if individual large deviation principles are
known. The following three lemmas ensure validity of these
conditions.

Observe   that the  family of
measures $({P}^{(\lambda)} \colon \lambda\in(0,\infty))$  is  exponentially tight on
$\Theta.$

\begin{lemma}[] \label{Com4}

	\begin{itemize}
	
\item[(i)] 	The  family of
	measures $(\tilde{P}^{\lambda} \colon \lambda\in(0,\infty))$  is  exponentially tight on
	$\Theta\times\tilde\skrim_*(\skriy\times \skriy).$

\item[(ii)] The  family	measures $(P_{\lambda} \colon \lambda\in(0,\infty))$  is  exponentially tight on
	$\Theta\times\tilde\skrim_*(\skriy\times \skriy)\times \skrim([\tau,\,\infty).$
\end{itemize}
\end{lemma}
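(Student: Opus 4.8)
The plan is to obtain the exponential tightness of each mixture from the exponential tightness of the base family $(P^{(\lambda)})$ on $\Theta$ --- noted just above --- together with a \emph{uniform} exponential tightness of the conditional laws being mixed. The mechanism is the product-compacta bound: writing the mixture as in \eqref{randomg.mixture}, for candidate compacta $K_1\subset\Theta$ and $K_2\subset\tilde\skrim_*(\skriy\times\skriy)$ one has
$$\tilde P^{\lambda}\big((K_1\times K_2)^c\big)\le P^{(\lambda)}(K_1^c)+\sup_{\sigma}P_{\sigma}^{(\lambda)}(K_2^c),$$
so it suffices to make both summands exponentially small of prescribed order.

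For part~(i) the conditional tightness rests on two ingredients. First, every endpoint of a link recorded by $M_2^{\lambda}$ is a point of the underlying powered Poisson process, so both marginals of $M_2^{\lambda}$ are dominated by $M_1^{\lambda}$; the exponential tightness of $(P^{(\lambda)})$ therefore forces, up to exponentially negligible probability, all the mass of $M_2^{\lambda}$ into a fixed bounded region $B\times B$ of $\skriy\times\skriy$. Second, on that event $\sigma$ is supported in $B$, so $\|h_*\sigma\otimes\sigma\|=\langle h_*,\sigma\otimes\sigma\rangle$ is bounded uniformly in such $\sigma$, and specializing the test function in Lemma~\ref{main1c} to a positive constant $t$ shows that
$$\Lambda_{\sigma}(t):=\lim_{\lambda\to\infty}\frac1\lambda\log\me\Big\{e^{\lambda t\|M_2^{\lambda}\|}\,\Big|\,M_1^{\lambda}=\sigma\Big\}$$
exists and is bounded uniformly over such $\sigma$. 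An exponential Chebyshev inequality then yields, for each fixed $t>0$,
$$\limsup_{\lambda\to\infty}\frac1\lambda\log\sup_{\sigma}\P\Big\{\|M_2^{\lambda}\|\ge L\,\Big|\,M_1^{\lambda}=\sigma\Big\}\le-tL+\sup_{\sigma}\Lambda_{\sigma}(t),$$
whose right-hand side tends to $-\infty$ as $L\uparrow\infty$. In the $\tau$-topology a set of measures on $\skriy\times\skriy$ with uniformly bounded total mass that is uniformly concentrated on $B\times B$ is relatively compact, so these two estimates produce, for each $\alpha$, a compact $K_2$ with $\sup_{\sigma}P_{\sigma}^{(\lambda)}(K_2^c)$ exponentially small of order $-\alpha$; inserting $K_1$ from the base tightness into the displayed mixture bound proves~(i).

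Part~(ii) follows by iterating the scheme with the extra coordinate $\skril^{1,2}$. Conditional on $(M_1^{\lambda},M_2^{\lambda})=(\sigma,\omega)$ the measures $(\skril^{1,2}_{(x,\rho_x)})$ are independent probability measures on $[\tau,\infty)$, as exhibited in the proof of Lemma~\ref{Com1}; by the connection rule every admissible value of $SINR$ lies in the bounded interval $[\tau,a]$, so each $\skril^{1,2}_{(x,\rho_x)}$ is supported in a fixed compact subset of $[\tau,\infty)$, while the total mass of $\skril^{1,2}$ is governed by $\omega_2$, whose exponential tightness is part of~(i). Hence the conditional family $(P_{(\sigma_\lambda,\omega_\lambda)}^{(\lambda)})$ is uniformly exponentially tight, and the analogue of the mixture bound,
$$P_{\lambda}\big((K_1\times K_2\times K_3)^c\big)\le\tilde P^{\lambda}\big((K_1\times K_2)^c\big)+\sup_{(\sigma,\omega)}P_{(\sigma,\omega)}^{(\lambda)}(K_3^c),$$
reduces~(ii) to~(i).

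The main obstacle throughout is the $\tau$-topology: the moment estimate of Lemma~\ref{main1c} only controls the total mass, so the genuinely delicate point is excluding escape of mass to spatial infinity. This is precisely why the argument must import the spatial localization from the base measure $(P^{(\lambda)})$ rather than attempt to read it off the conditional exponential moments alone.
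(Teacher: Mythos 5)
The paper states Lemma~\ref{Com4} without any proof (it is simply listed among the lemmas said to ``ensure validity'' of Biggins' conditions), so there is no argument of the authors to compare yours against; judged on its own, your mixture decomposition and the exponential-Chebyshev control of $\|M_2^{\lambda}\|$ obtained from Lemma~\ref{main1c} with a constant test function are the natural ingredients (this is exactly how the analogous statement is proved in the finite type-space setting of coloured random graphs), but the proof breaks at the compactness step. The criterion you invoke --- that a family of measures with uniformly bounded total mass, uniformly concentrated on a bounded set $B\times B$, is relatively compact in the $\tau$-topology --- is false; it is a Prokhorov-type criterion valid for the weak topology only. Concretely, the Dirac measures $\delta_{x_n}$ at distinct points $x_n\in B$ have mass one and support in $B$, yet no subnet converges in the $\tau$-topology: testing against the indicator of each singleton forces any setwise limit $\mu$ to give the countable set $\{x_n:n\in\N\}$ measure zero, while testing against the indicator of that whole set forces it to have measure one. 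Relative $\tau$-compactness requires something like uniform domination by a fixed finite measure with uniformly integrable densities, which purely atomic empirical measures (atoms of size $1/\lambda$, $2/\lambda$) can never inherit from mass bounds and spatial localization alone. So the compact set $K_2$ that your mixture bound $\tilde P^{\lambda}\big((K_1\times K_2)^c\big)\le P^{(\lambda)}(K_1^c)+\sup_{\sigma}P_{\sigma}^{(\lambda)}(K_2^c)$ needs is never produced; your closing remark correctly identifies the $\tau$-topology as the obstacle, but importing localization from the base family does not overcome it.

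Part~(ii) contains a second, independent gap: the claim that ``by the connection rule every admissible value of $SINR$ lies in the bounded interval $[\tau,a]$'' misreads the definition of $\Phi_a^{\lambda}$. There $a(\rho_x)$ is the running argument of the empirical sinr distribution function, not a cutoff imposed by the model; connected pairs satisfy only the lower bound $SINR\ge\tau^{(\lambda)}$, and since $\phi(r)=r^{-\ell}\to\infty$ as $r\to 0$, the recorded sinr values are unbounded above. Hence the per-site measures $\skril^{1,2}_{(x,\rho_x)}$ are not uniformly compactly supported in $[\tau,\infty)$, and the asserted uniform exponential tightness of the conditional laws $\big(P^{(\lambda)}_{(\sigma,\omega)}\big)$ does not follow. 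Two smaller inaccuracies should also be repaired: the marginals of $M_2^{\lambda}$ are not dominated by $M_1^{\lambda}$ --- each point is counted with multiplicity equal to its degree, so only the support inclusion (which is all you actually use) is true --- and the Chebyshev step needs a non-asymptotic bound on $\me\big[e^{\lambda t\|M_2^{\lambda}\|}\,\big|\,M_1^{\lambda}=\sigma\big]$ valid uniformly in $\sigma$ at finite $\lambda$, rather than the pointwise limit $\Lambda_{\sigma}(t)$, since $\limsup_{\lambda}\sup_{\sigma}$ cannot be bounded by $\sup_{\sigma}\limsup_{\lambda}$ in general; this last point is fixable via the elementary estimate $(1-p+pe^{t})^{m}\le e^{mp(e^{t}-1)}$, but the two compactness gaps are not.
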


Define the function
$I\colon{\Theta}\times\skrim_*(\skriy\times \skriy)\rightarrow[0,\infty],$  by

\begin{equation}
I(\sigma,\,\omega)=  H\Big(\sigma\,\Big |\eta\otimes q\Big)+ \skrih\Big(\omega\|h_*\sigma\otimes\sigma\Big)
\end{equation}

  and  recall  from  Theorem~\ref{main1abc}  that  
   $$\tilde{J}(\nu)= \frac{1}{2}\int_{\skriy}H\Big(\nu_{(x,\rho_x)}\,\Big\|\, \Big\langle \Phi_{\sigma}\, ,\,\omega \Big\rangle_{[x\, ,\,\rho_x]} 
  \Big)\omega_2((dx,d\rho_x)).$$

\begin{lemma}[]\label{Com5}
\begin{itemize}
	
\item[(i)]	$I$ is lower semi-continuous.
\item  [(ii)] $\tilde{J}$  is  lower  semi-continuous.
\end{itemize}
\end{lemma}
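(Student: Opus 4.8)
The plan is to obtain both assertions from the Donsker--Varadhan variational representation of relative entropy, combined with the elementary fact that a supremum of lower semi-continuous functions is again lower semi-continuous. Throughout, the space $\skrim(\skriy)$ carries the $\tau$-topology, so that $\sigma\mapsto\langle f,\sigma\rangle$ is continuous for every bounded measurable $f$; this is exactly what makes the dual representation convenient here.

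For part (i), I would treat the two summands of $I$ separately. The first term admits the representation
$$H(\sigma\,|\,\eta\otimes q)=\sup_{f}\Big\{\langle f,\sigma\rangle-\log\langle e^{f},\eta\otimes q\rangle\Big\},$$
with $f$ ranging over bounded measurable functions on $\skriy$; for each fixed $f$ the right-hand side is $\tau$-continuous in $\sigma$, so the supremum is lower semi-continuous. For the second term I would reuse the dual form already produced in the proof of Lemma~\ref{main1c}, namely
$$\skrih(\omega\|h_*\sigma\otimes\sigma)=\sup_{g}\Big\{\langle g,\omega\rangle+\langle 1-e^{g},h_*\sigma\otimes\sigma\rangle\Big\}.$$
Here $\omega\mapsto\langle g,\omega\rangle$ is $\tau$-continuous, and $\sigma\mapsto\langle(1-e^{g})h_*,\sigma\otimes\sigma\rangle$ is continuous once one knows that $\sigma\mapsto\sigma\otimes\sigma$ is continuous, which follows by testing against product functions and approximating. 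Hence each term inside the supremum is jointly continuous in $(\sigma,\omega)$, the supremum is jointly lower semi-continuous, and the sum of the two lower semi-continuous pieces is lower semi-continuous.

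For part (ii), fix $(\sigma,\omega)$, so that the reference measures $\langle\Phi_\sigma,\omega\rangle_{(x,\rho_x)}$ are held fixed, and represent the integrand through
$$H\big(\nu_{(x,\rho_x)}\big\|\langle\Phi_\sigma,\omega\rangle_{(x,\rho_x)}\big)=\sup_{g}\Big\{\langle g,\nu_{(x,\rho_x)}\rangle-\log\langle e^{g},\langle\Phi_\sigma,\omega\rangle_{(x,\rho_x)}\rangle\Big\}.$$
As before this is lower semi-continuous in $\nu_{(x,\rho_x)}$ for each fixed $(x,\rho_x)$. Given $\nu^{(n)}\to\nu$, lower semi-continuity of the integrand gives $\liminf_n H(\nu^{(n)}_{(x,\rho_x)}\|\cdot)\ge H(\nu_{(x,\rho_x)}\|\cdot)$ pointwise, and Fatou's lemma applied to the integral against the nonnegative measure $\omega_2$ yields $\liminf_n\tilde J(\nu^{(n)})\ge\tilde J(\nu)$, which is the assertion.

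I expect the main obstacle to be the joint lower semi-continuity of the second term of $I$: the reference measure $h_*\sigma\otimes\sigma$ depends quadratically on the varying argument $\sigma$, and $h_*$ need not be bounded, so the integrand $(1-e^{g})h_*$ in the dual form need not be a bounded function against which $\tau$-continuity of $\sigma\mapsto\sigma\otimes\sigma$ applies directly. I would circumvent this by restricting the supremum to those bounded measurable $g$ for which $(1-e^{g})h_*$ is bounded; this restricted supremum still recovers $\skrih$ by approximation, while for each admissible $g$ the functional is now jointly $\tau$-continuous, so the supremum is jointly lower semi-continuous as required.
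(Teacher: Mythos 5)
You cannot be matched against the paper's own argument here, because the paper does not prove this lemma at all: it is stated bare, as one of the hypotheses to be fed into Biggins' mixture theorem, and the text passes straight to the conclusion. So your proposal supplies something the paper omits, and its architecture --- Donsker--Varadhan duality to write each entropy as a supremum of functionals that are affine and $\tau$-continuous in the measure variables, plus Fatou's lemma for the $\omega_2$-integral in part (ii) --- is the natural strategy, and is consistent with the dual form of $\skrih$ that the paper itself produces in the proof of Lemma~\ref{main1c}. The first term $H(\sigma\,|\,\eta\otimes q)$ is handled correctly, and part (ii) is fine modulo the fact that the paper never specifies the topology on the family $\nu=(\nu_{(x,\rho_x)})$; you implicitly assume one in which convergence of $\nu^{(n)}$ forces pointwise convergence of the kernels, which is a reasonable reading.

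The genuine gap sits exactly where you located the ``main obstacle,'' but your fix does not repair it. The difficulty with the second term of $I$ is not only that $h_*$ may be unbounded; it is that $\sigma\mapsto\langle F,\sigma\otimes\sigma\rangle$ is not known to be $\tau$-continuous for a \emph{general} bounded measurable $F$ on $\skriy\times\skriy$. The $\tau$-topology is generated by integration against all bounded measurable functions, and ``testing against product functions and approximating'' does not close this: a general $F$ is only a bounded pointwise (monotone-class) limit of combinations of products, and such approximation commutes with integration against a fixed measure but not with limits along nets $\sigma_\alpha\to\sigma$, so continuity of $\sigma\mapsto\langle F,\sigma\otimes\sigma\rangle$ does not follow. (For sequences one can prove setwise convergence of $\sigma_n\otimes\sigma_n$ via Vitali--Hahn--Saks and Dunford--Pettis, but that argument is intrinsically sequential, and the $\tau$-topology is neither metrizable nor sequential; the same caveat applies to Fatou in part (ii), which likewise yields only sequential lower semi-continuity.) Your restricted supremum, over $g$ with $(1-e^{g})h_*$ bounded, attacks boundedness only: for such $g$ the map $\sigma\mapsto\langle(1-e^{g})h_*,\sigma\otimes\sigma\rangle$ is still not known to be $\tau$-continuous, and the claim that the restricted supremum still recovers $\skrih(\omega\,\|\,h_*\sigma\otimes\sigma)$ is asserted rather than proved. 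To close the gap you would need either to restrict to $g$ for which $(1-e^{g})h_*$ is a finite combination of bounded product functions --- where joint continuity is genuine, since $\langle f\otimes h,\sigma\otimes\sigma\rangle=\langle f,\sigma\rangle\langle h,\sigma\rangle$ --- and show this class suffices in the supremum, or to work throughout in a metrizable topology (e.g.\ the weak topology) where the sequential arguments you invoke are actually equivalent to lower semi-continuity.
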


By (Biggins, Theorem~5(b), 2004) the two previous lemmas and the
large deviation principles we have established
Theorem~\ref{main1b}  and  Theorem~\ref{main1abc} ensure
that under $(\tilde{P}^{\lambda})$    and   $P_{\lambda}$ the random variables $(\sigma_{\lambda}, \eta_{\lambda})$   and $(\nu, \sigma_{\lambda}, \eta_{\lambda})$  satisfy a large deviation principle on
$\skrim(\skriy) \times \tilde\skrim(\skriy\times \skriy)$ and   	$\Theta\times\tilde\skrim_*(\skriy\times \skriy)\times \skrim([\tau,\,\infty) $  with good rate function  $I$   and $\tilde{J}$  respectively,  which  ends  the  proof of  Theorem~\ref{main1b}.


\end{document}